\RequirePackage{fix-cm}
\documentclass[ims,authoryear]{imsart}
\RequirePackage{amsthm,amsmath,amsfonts,amssymb}
\RequirePackage{mathtools}
\RequirePackage{natbib}
\startlocaldefs
\numberwithin{equation}{section}
\theoremstyle{plain}
\newtheorem{theorem}{Theorem}[section]
\newtheorem{proposition}[theorem]{Proposition}
\newtheorem{lemma}[theorem]{Lemma}
\newtheorem{corollary}[theorem]{Corollary}
\theoremstyle{definition}

\theoremstyle{definition}
\newtheorem{remark}[theorem]{Remark}
\newcommand{\OO}{\mathrm{O}}
\newcommand{\SO}{\mathrm{SO}}
\newcommand{\so}{\mathfrak{so}}
\newcommand{\R}{\mathbb R}
\newcommand{\T}{\mathbb T}
\newcommand{\E}{\mathbb E}
\newcommand{\PP}{\mathbb P}

\newcommand{\tr}{\operatorname{tr}}
\newcommand{\diag}{\operatorname{diag}}
\newcommand{\Law}{\mathcal L}

\newcommand{\dist}{\operatorname{dist}}
\newcommand{\TV}{\mathrm{TV}}
\newcommand{\dd}{\mathrm d}
\newcommand{\ip}[2]{\langle #1,#2\rangle_*}
\newcommand{\norms}[1]{\lVert #1\rVert_*}
\newcommand{\normF}[1]{\lVert #1\rVert_{\mathrm F}}
\newcommand{\Wc}{\mathcal W_2}
\newcommand{\eps}{\varepsilon}
\newcommand{\dF}{d_{\mathrm F}}
\allowdisplaybreaks[1]
\endlocaldefs

\begin{document}
\begin{frontmatter}
\title{Wasserstein mixing of a systematic-scan random rotation sampler}
\runtitle{Wasserstein mixing of a systematic-scan random rotation sampler}
\runauthor{A. Sepehri}
\begin{aug}
\author[A]{\fnms{Amir}~\snm{Sepehri}\ead[label=e1]{sepehri@alumni.stanford.edu}}
\address[A]{Independent Researcher, San Diego, CA \printead[presep={,\ }]{e1}}
\end{aug}
\begin{abstract}
We study the mixing time of a systematic-scan analogue of Kac's walk that
was proposed as a fast surrogate for Haar-distributed orthogonal matrices in
randomized high-dimensional algorithms and was conjectured to approach Haar
measure after only logarithmically many sweeps. We show that this conjectured
speed-up does not occur for convergence of the full matrix law to Haar
measure in Frobenius Wasserstein distance. At fixed normalized accuracy, the
mixing time lies between order $n/\log n$ and order $n$ sweeps; at fixed
absolute Frobenius accuracy, the corresponding bounds are between order $n$
and order $n\log n$. More strongly, below the scale $n/\log n$, the
normalized Wasserstein distance remains asymptotically at its extremal value.
We also show that the output law is singular with respect to Haar measure for
fewer than $n/2$ sweeps. Thus the sampler may provide effective
application-specific randomization without exhibiting the much faster
full-Haar mixing.
\end{abstract}
\begin{keyword}[class=MSC]
\kwdgroup[type=primary]{\kwd{60J05}\kwd{60B15}}
\kwdgroup[type=secondary]{\kwd{65C05}}
\end{keyword}
\begin{keyword}
\kwd{Haar measure}
\kwd{mixing time}
\kwd{path coupling}
\kwd{random rotations}
\kwd{structured random matrices}
\kwd{systematic scan}
\kwd{Wasserstein distance}
\end{keyword}
\end{frontmatter}

\section{Introduction}
\cite{JOR11} proposed a structured random orthogonal transformation for use in randomized algorithms for high-dimensional data analysis, particularly approximate nearest-neighbor search. Their construction alternates random coordinate permutations with
sequential two-dimensional rotations and includes a Fourier-type transform,
yielding a fast-to-apply surrogate for a dense random orthogonal matrix. The practical motivation is
computational: a dense Haar-distributed orthogonal matrix is expensive to
form and apply, whereas their structured transformation is represented as a
product of permutations, sequential two-dimensional rotations, and a Fourier
transform, so that it can be applied to a vector in
$O(n(\log n+m))$ operations when $m$ rotation sweeps are used. This
structured randomization is useful in its own right and underlies the
computational gains sought in their nearest-neighbor construction.

The authors further suggested that the transformation is observed to be close
to Haar measure once the total number of rotation sweeps satisfies
$M_1+M_2=O(\log n)$ \cite[Remark 2]{JOR11} . If valid in a strong full-distribution
sense, such a mixing rate would have had two notable consequences. First, it would yield an
explicit approximately Haar-distributed $n\times n$ orthogonal matrix in
$O(n^2\log n)$ operations by applying the transformation to the standard
basis, substantially below the cubic cost of standard exact Haar samplers.
Second, from the viewpoint of Markov-chain dynamics, the construction can be
viewed as a systematic-scan analogue of Kac's walk, augmented by additional
scrambling through random coordinate permutations and the Fourier transform.
Whereas Kac's walk selects a coordinate plane at random at each elementary
step, one sweep of the present sampler visits the adjacent planes
\[
(1,2),(2,3),\ldots,(n-1,n)
\]
in a prescribed order. An $O(\log n)$-sweep mixing theorem would therefore
provide an unusually strong example in which a systematic schedule of
low-dimensional updates, together with scrambling, achieves a substantially
faster full-distribution mixing rate than the corresponding random-scan
paradigm.

Our results show that these stronger full-distribution conclusions do not
occur in Frobenius Wasserstein distance. At fixed normalized accuracy, the
number of sweeps lies between order $n/\log n$ and order $n$; at fixed
absolute Frobenius accuracy, it lies between order $n$ and order
$n\log n$. In particular, $O(\log n)$ sweeps do not approximate the full Haar
law in normalized Wasserstein distance. This does not diminish the
computational usefulness of the sampler for its intended applications:
effective randomization for a particular algorithm, marginal, or observable
is a different requirement from approximation of the complete matrix law by
the uniform distribution.

Throughout the paper, $\OO(n)$ denotes the group of real $n\times n$ orthogonal matrices, and $\SO(n)$ its determinant-$+1$ component. Haar measure is the uniform probability measure on these compact matrix groups. The argument uses some elementary geometry of orthogonal matrices, but no prior familiarity with Lie groups is assumed; necessary concepts are introduced in Section~\ref{sec:geometry}.

The mixing time upper bound is based on a coupling argument. The coupling method builds on the local-to-global Wasserstein framework developed by \cite{Oliveira09} for Kac's walk. A sweep here is not a sequence of independently selected coordinate pairs: its planes are adjacent and used in a prescribed order. Thus one cannot obtain the result by simply dividing a Kac-walk bound by $n-1$. The sampler-specific calculation is that the derivatives with respect to all $n-1$ sweep angles form an orthonormal tangent frame. A triangular change of the angles preserves their joint uniform law exactly and removes an average fraction $2/n$ of any infinitesimal squared displacement. We give the local-to-global argument explicitly, including its extension to arbitrary initial laws and its treatment of the two determinant components.

The support-entropy approach to lower bounds also has a precedent in \cite[Section 6]{Oliveira09}. Here we calculate the entropy of the full sequential-sweep support, retaining both its continuous angle parameters and its discrete permutations. For completeness, the needed Haar small-ball estimate is proved directly from the conditional distribution of Haar columns and a Gaussian tail inequality; no concentration theorem is needed for the lower bound. Applying the nonasymptotic estimate at radii of order $n^{-1/2}$ in the normalized metric yields the absolute-accuracy lower bound. Section~\ref{subsec:kaccomparison} compares Oliveira's convention and clock with ours; the normalized sweep estimate is not, by itself, a speedup over Kac's walk.

For orientation, the proof of the upper bound has three ingredients. First, changing the $n-1$ angles in one sweep generates $n-1$ mutually orthogonal infinitesimal directions. Second, averaging over the random permutation makes these directions capture exactly a fraction $2/n$ of the squared size of an arbitrary infinitesimal displacement. Third, a triangular shift of the angles removes those projected directions while preserving the joint uniform distribution of all angles. Oliveira's local-to-global theorem then converts this nearby-point contraction into a Wasserstein contraction for arbitrary initial laws. The lower bound uses a different idea: after $m$ sweeps the output depends on only $m(n-1)$ continuous angles, together with finitely many permutations, so its support has much smaller metric complexity than a typical Haar subset when $m$ is small.

\subsection{The sampler and the precise scope of the results}

Let $e_1,\ldots,e_n$ be the standard basis, and put
\[
 E_{ij}=e_i e_j^{\mathsf T}-e_j e_i^{\mathsf T},\qquad
 R_{ij}(\theta)=\exp(\theta E_{ij})\quad(i<j).
\]
The nontrivial block of $R_{ij}(\theta)$ is
$\left(\begin{smallmatrix}\cos\theta&\sin\theta\\-\sin\theta&\cos\theta\end{smallmatrix}\right)$.
For $\theta\in\T^{n-1}$, where $\T=\R/(2\pi\mathbb Z)$, define
\begin{equation}\label{eq:sweep}
 Q(\theta)=R_{n-1,n}(\theta_{n-1})\cdots
 R_{2,3}(\theta_2)R_{1,2}(\theta_1).
\end{equation}
One random sweep is the matrix $G=Q(\theta)P$, where $P$ is a uniform permutation matrix and the $n-1$ angles are independent uniform variables on $\T$, independent of $P$. Different sweeps use independent randomness.

These are ideal continuous random variables. The results do not assert a property of an unspecified finite-state pseudorandom number generator or an unspecified floating-point implementation. Within this idealization, \eqref{eq:sweep} is the sequential rotation rule in \cite[Section 1.1, equations (1)--(2)]{Sepehri19}; the independence of all angle variables is also explicit in \cite[Section 2.4]{JOR10}.

The Fourier factor in that construction is as follows. Its exact formula is included to identify the original sampler; for the proofs it may simply be regarded as a fixed orthogonal matrix, since our bounds are uniform over predetermined orthogonal insertions. For $d=\lfloor n/2\rfloor$, let
\[
 T_{kl}=d^{-1/2}\exp\{-2\pi\mathrm i(k-1)(l-1)/d\},\qquad
 (Zx)_l=x_{2l-1}+\mathrm i x_{2l}.
\]
On $\R^{2d}$ define $F_{2d}=Z^{-1}TZ$; for odd $n$, extend this map by fixing the last coordinate. Since $T$ is unitary, $F_n$ is orthogonal. Its real determinant is $|\det_{\mathbb C}T|^2=1$, so $F_n\in\SO(n)$.

The original output is a product of $M_2$ sweeps, followed by $F_n$, followed by $M_1$ sweeps, with $m=M_1+M_2$. We prove a slightly more general result. Given deterministic matrices $A_0,\ldots,A_m,x\in\OO(n)$, let
\begin{equation}\label{eq:process}
 X_0=A_0x,\qquad X_t=A_t Q(\theta^{(t)})P_tX_{t-1},
 \quad 1\leq t\leq m,
\end{equation}
and write $\nu_{n,m}^{A,x}=\Law(X_m)$. Choosing $x=I$, all $A_t=I$ except $A_{M_2}=F_n$, recovers the sampler above, with the endpoint interpretations $A_0=F_n$ or $A_m=F_n$ allowed. Relabeling independent sweep variables accommodates either convention for writing the products in \cite{Sepehri19}.

All deterministic factors are fixed in advance and independent of the random variables. They may depend on $n$ and $m$. The lower bounds do not cover arbitrary data-dependent choices of these factors, nor an initial condition containing additional continuous randomness.

\subsection{Probability metrics}

Let $H_n$ denote Haar probability measure on $\OO(n)$, and let $H_{n,+}$ and $H_{n,-}$ be its conditional laws on the two determinant components. For a real matrix $A$, the Frobenius norm is the ordinary Euclidean norm of its entries,
\[
 \normF A=\left(\sum_{i,j}A_{ij}^2\right)^{1/2}=\bigl(\tr(A^{\mathsf T}A)\bigr)^{1/2}.
\]
Write $\dF(U,V)=\normF{U-V}$ for the resulting chordal distance between two orthogonal matrices, and define
\begin{equation}\label{eq:rho}
 \rho(U,V)=\frac{\normF{U-V}}{\sqrt n}.
\end{equation}
For $p\geq1$ and probability measures $\mu,\nu$, write
\[
 W_{p,\rho}(\mu,\nu)
 =\inf_{\pi\in\Pi(\mu,\nu)}
 \left(\int\rho(U,V)^p\,\pi(\dd U,\dd V)\right)^{1/p},
\]
where $\Pi(\mu,\nu)$ is the set of couplings. We take total variation to mean
$\|\mu-\nu\|_{\TV}=\sup_B|\mu(B)-\nu(B)|$, with values in $[0,1]$.
For each fixed $n$, Wasserstein convergence for $\rho$ is equivalent to weak convergence because the group is compact. This fixed-dimension fact does not make mixing rates invariant under dimension-dependent rescaling of the metric.

For every $p\geq1$, rescaling the cost in the same set of couplings gives the exact identity
\begin{equation}\label{eq:Wscaling}
 W_{p,\rho}(\mu,\nu)=n^{-1/2}W_{p,\dF}(\mu,\nu).
\end{equation}
Thus normalized accuracy $\eps$ is precisely absolute accuracy $\eps\sqrt n$. The normalization does not change the chain or its contraction factor; it changes the accuracy scale.

The scale $\sqrt n$ is intrinsic to the matrices: $\normF U^2=n$ for every $U\in\OO(n)$, and the diameters for $\dF$ and $\rho$ are $2\sqrt n$ and $2$, respectively. Moreover,
\begin{equation}\label{eq:normalizationmeaning}
 \rho(U,V)^2
 =\frac1n\sum_{j=1}^n\|Ue_j-Ve_j\|_2^2
 =\E_v\|(U-V)v\|_2^2,
\end{equation}
where $v$ is uniform on the unit sphere. The last equality follows from $\E_v[vv^{\mathsf T}]=I/n$ by taking a trace. Consequently normalized $W_2$ is an optimally coupled root-mean-square discrepancy averaged over input directions, with one matrix coupling shared across directions, not worst-case discrepancy over all directions. A fixed nonzero rotation in one plane illustrates the distinction: its $\dF$ distance from the identity is constant, whereas its $\rho$ distance tends to zero. Dividing by $n$ instead of $\sqrt n$ would make even the diameter tend to zero and would trivialize fixed positive accuracy.

\subsection{Permutation shuffles and systematic scans}
\label{subsec:scans}

There is a finite-group analogue of the distinction between Kac's random
choice of coordinate planes and the ordered rotations in
\eqref{eq:sweep}. For random transpositions, total-variation cutoff occurs at
$\tfrac12 n\log n$ \cite{DS81}. In the cyclic-to-random variant, one
position is visited deterministically while the other is chosen uniformly;
its mixing time remains of order $n\log n$ \cite{Mironov02,MPS04}. Thus a
systematic schedule need not change the asymptotic mixing order.

A closer analogue of the adjacent-plane sweep is provided by adjacent
transpositions. Systematic- and random-scan versions of a biased
adjacent-transposition chain are both known to mix in order $n^2$
elementary updates \cite{DR00,BBHM05}. In a different adjacent-transposition
shuffle, a cyclic systematic scan improves the mixing constant by a factor
of two relative to random scan \cite{Lacoin16,NN19}. Hence the permutation
literature provides examples in which systematic scheduling preserves the
mixing order, and others in which it improves the constant, but not the
kind of order-of-magnitude acceleration that an $O(\log n)$-sweep theorem
would suggest here.

The analogy is structural rather than literal. The present sampler may be
viewed as a systematic-scan analogue of Kac's walk augmented by additional
scrambling: each sweep visits the adjacent planes
\[
 (1,2),(2,3),\ldots,(n-1,n)
\]
in a prescribed order, while an independent uniform permutation is applied
between sweeps and the original construction also contains a fixed Fourier
factor. The permutation is essential to the isotropic projection identity
in Lemma~\ref{lem:isotropy}; removing it gives a different chain.
Comparisons with Kac's walk must also use a common clock: $m$ sweeps contain
$m(n-1)$ elementary plane rotations. Our results therefore do not, by
themselves, imply a speedup over Kac's walk at a fixed number of elementary
rotation updates.

\subsection{Kac's walk: clocks, metrics, and recent results}\label{subsec:kaccomparison}

For reference, one step of Kac's walk on $\SO(n)$ chooses an unordered coordinate pair uniformly, chooses an independent uniform angle, and left-multiplies the current matrix by the corresponding plane rotation. Thus one Kac step contains one elementary rotation, whereas one sweep of our sampler contains $n-1$ prescribed adjacent-plane rotations followed by a random permutation.

\cite[Section 1 and Theorems 1--2]{Oliveira09} uses the unnormalized Hilbert--Schmidt distance $\dF$ and the intrinsic distance $D_{\mathrm{HS}}$ induced by $\langle A,B\rangle_{\mathrm{HS}}=\tr(A^{\mathsf T}B)$ on $\SO(n)$. His upper bound is
\begin{equation}\label{eq:oliveiraabsolute}
 \tau^{\mathrm{Kac}}_{D_{\mathrm{HS}},2}(\eps)
 \leq\left\lceil n^2\log\frac{\pi\sqrt n}{\eps}\right\rceil,
 \qquad 0<\eps<\pi\sqrt n,
\end{equation}
where one step is one random plane rotation and $\tau$ is worst-case mixing time. His lower bound is $\tau^{\mathrm{Kac}}_{\dF,1}(\eps_0)\geq c n^2$ for universal $c,\eps_0>0$. These are unnormalized statements.

The intrinsic and chordal metrics differ by dimension-independent factors:
\begin{equation}\label{eq:HSchordcomparison}
 \dF(U,V)\leq D_{\mathrm{HS}}(U,V)
 \leq\frac\pi2\dF(U,V),\qquad U,V\in\SO(n).
\end{equation}
Our intrinsic metric $D$ in Section~\ref{sec:geometry} satisfies $D_{\mathrm{HS}}=\sqrt2\,D$ on a component; that constant convention is separate from division by $\sqrt n$.

Rescaling \eqref{eq:oliveiraabsolute} gives
\begin{equation}\label{eq:oliveiranormalized}
 \tau^{\mathrm{Kac}}_{D_{\mathrm{HS}}/\sqrt n,2}(\eps)
 =\tau^{\mathrm{Kac}}_{D_{\mathrm{HS}},2}(\eps\sqrt n)
 \leq\left\lceil n^2\log\frac\pi\eps\right\rceil,
 \qquad 0<\eps<\pi.
\end{equation}
The same upper bound applies to $W_{2,\rho}$ by \eqref{eq:HSchordcomparison}. An absolute fixed-accuracy lower bound, however, becomes a normalized lower bound at accuracy $\eps_0/\sqrt n$, not at fixed normalized accuracy.

Table~\ref{tab:clocks} matches the metric and the elementary-rotation clock. A sweep contains $n-1$ rotations and a permutation; the extra permutation and deterministic-insertion costs are not included in the table. Kac's target is Haar measure on $\SO(n)$, whereas our uncorrected sampler targets $\OO(n)$; Proposition~\ref{prop:SO} provides the determinant-corrected version for the same target. These comparisons concern sufficient times, not sharp rates or wall-clock costs. In particular, the normalized $O(n)$ sweep estimate is not a logarithmic improvement over Oliveira's result: after matching accuracy scales and counting rotations, the two coupling upper bounds have the same order.

\begin{table}[tb]
\setpkgattr{tablename}{size}{\upshape\scshape}
\caption{Upper bounds at fixed positive Wasserstein accuracy, with normalized accuracy below $\sqrt2$. Kac's column follows from \eqref{eq:oliveiraabsolute}--\eqref{eq:oliveiranormalized}; the sampler columns follow from Theorem~\ref{thm:upper}.}\label{tab:clocks}
\centering
\begin{tabular}{@{}lccc@{}}
\hline
Metric & \shortstack{Kac's walk\\rotations}
 & \shortstack{Sequential sampler\\sweeps}
 & \shortstack{Sequential sampler\\rotations}\\
\hline
$W_{2,\rho}$ & $O(n^2)$ & $O(n)$ & $O(n^2)$\\
$W_{2,\dF}$ & $O(n^2\log n)$ & $O(n\log n)$ & $O(n^2\log n)$\\
\hline
\end{tabular}
\end{table}

The literature on Kac's walk now gives a sharper picture, but several distinct state spaces and notions of mixing should be kept separate. On the sphere $S^{n-1}$, \cite{PS17} proved total-variation mixing in order $n\log n$, and \cite{JM26} recently proved cutoff, for the walk started from a coordinate vector, at $C_{\mathrm{BRW}}n\log n$ with an explicit constant $C_{\mathrm{BRW}}\approx3.8916$. On the full rotation group $\SO(n)$, \cite{PS18} previously bounded the total-variation mixing time between order $n^2$ and order $n^4\log n$. Very recently, \cite{PS26} improved the upper bound to $O(n^2\log n)$, matching the conjectured order up to constants. A different recent result, \cite{PSV26}, studies ``pseudo-mixing'' on $\SO(n)$: the first $k$ columns mix in Wasserstein distance in $O(n(k+\log n)\log n)$ steps at fixed accuracy, and low-degree polynomial tests become unable to distinguish the walk from Haar on a related scale. These developments reinforce a distinction that is central here: full-matrix mixing, mixing of a single vector or a few columns, and agreement for restricted classes of observables can occur on different time scales.

For the direct quantitative comparison in Table~\ref{tab:clocks}, however, Oliveira's 2009 Wasserstein estimate remains the relevant benchmark because it concerns the same full rotation-matrix chain and a closely comparable transportation metric. 

\section{Main results}\label{sec:main}

The four main statements separate the two sides of the problem. The upper bound is dynamical: repeated sweeps contract nearby laws in Wasserstein distance. The lower bounds are geometric: with too few sweeps, the set of possible outputs is too small, in metric-entropy or dimension terms, to resemble Haar measure globally. We state the results first and postpone all geometric notation needed only for the proofs until Section~\ref{sec:geometry}.

Put
\begin{equation}\label{eq:constants}
 r_n=\sqrt{1-\frac2n},\qquad
 b_n=\pi\sqrt{\frac{2\lfloor n/2\rfloor}{n}}.
\end{equation}

\begin{theorem}[Wasserstein upper bound]\label{thm:upper}
Let $n\geq3$ and $m\geq1$. For every deterministic choice in \eqref{eq:process},
\begin{equation}\label{eq:upper}
 W_{2,\rho}(\nu_{n,m}^{A,x},H_n)
 \leq \min\{\sqrt2,\,b_n r_n^m\}
 \leq \min\{\sqrt2,\,\pi e^{-m/n}\}.
\end{equation}
Consequently, $m\geq\lceil n\log(\pi/\eps)\rceil$ is sufficient for accuracy $0<\eps<\sqrt2$. In the unnormalized Frobenius metric,
\begin{equation}\label{eq:unnormalized}
 W_{2,\normF{\cdot}}(\nu_{n,m}^{A,x},H_n)
 \leq\pi\sqrt{2\lfloor n/2\rfloor}\,r_n^m
 \leq\pi\sqrt n\,e^{-m/n}.
\end{equation}
For $n=2$, the output is exactly Haar after one or more sweeps.
\end{theorem}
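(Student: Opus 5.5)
The plan is to set up a one-step Wasserstein contraction for the kernel $K(U,\cdot)=\Law(AQ(\theta)PU)$, with contraction factor $r_n$ in the intrinsic metric $D$ on each determinant component, and then iterate. The contraction comes from a local (infinitesimal) estimate, which Oliveira's local-to-global theorem upgrades to a global one.

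\textbf{Local step.} Fix $U$ and a tangent direction $U\Xi$ with $\Xi\in\so(n)$. I realize both chains with the same $P$ and with angles related by a shift. Write $G(\theta)=Q(\theta)P$ and $\partial_k G=Q(\theta)\,S_k\,P$ for the angle derivatives. Here $S_k$ is the conjugate of $E_{k,k+1}$ by the partial product $R_{k-1,k}\cdots R_{12}$, transported suitably. The $S_k$ are orthonormal in the (scaled) Hilbert--Schmidt inner product. The reason is that $\langle \tilde E_k,\tilde E_l\rangle$ for $k<l$ reduces to an inner product of $E_{k,k+1}$ with a conjugate of $E_{l,l+1}$ by rotations in planes $(j,j+1)$ with $j<l$. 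One checks this vanishes because those rotations fix coordinate $l+1$ while $E_{k,k+1}$ has no $(l+1)$ entries. This is the orthonormal-frame fact announced in the introduction.

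Next, average over $P$ uniform. The displacement $P\Xi P^{\mathsf T}$ is a uniformly relabelled skew matrix. Its squared projection onto the span of the $n-1$ frame vectors has expectation $\frac{n-1}{\binom n2}\|\Xi\|^2=\frac2n\|\Xi\|^2$ (Lemma~\ref{lem:isotropy}). The frame depends on $\theta$, but the $P$-average is computed conditionally on $\theta$ and still gives exactly $2/n$.

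\textbf{Removing the projection.} I perturb the angles to $\theta+\eps\phi(\theta)$, with $\phi$ chosen so that $\sum_k\phi_k\partial_kG$ cancels the projected part of the displacement. Because the frame at coordinate $k$ depends only on $\theta_1,\dots,\theta_{k-1}$, I construct the shift triangularly: $\theta_k\mapsto\theta_k+\eps\phi_k(\theta_1,\dots,\theta_{k-1},P)$. Each such map is a translation in $\theta_k$ conditional on the earlier angles, so it preserves the uniform law on $\T^{n-1}$ exactly. The coupled endpoints then differ by the orthogonal complement, to first order. This gives
\[\E\,D(\cdot,\cdot)^2\le (1-\tfrac2n)\eps^2\|\Xi\|^2+o(\eps^2).\]
Left multiplication by the fixed $A_t$ is an isometry and does not affect this. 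Oliveira's local-to-global theorem turns the local bound into $W_{2,D}(K(U,\cdot),K(V,\cdot))\le r_nD(U,V)$ for $U,V$ in the same component. It also gives the corresponding bound $W_{2,D}(\mu K,\nu K)\le r_nW_{2,D}(\mu,\nu)$ for laws supported on one component.

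\textbf{Components and constants.} One sweep sends each component to a uniform mixture of both, since $\det P=\pm1$ with probability $1/2$ each when $n\ge2$. So after one step, $\nu_{n,m}^{A,x}$ places mass $1/2$ on each component. I couple component by component against $H_{n,\pm}$, using that Haar is invariant under the kernel. Starting from a point mass, the initial distance to $H_{n,\pm}$ is at most the intrinsic diameter of $\SO(n)$ in $D$. With the paper's normalization this is $\pi\sqrt{\lfloor n/2\rfloor}$, attained by rotation by $\pi$ in $\lfloor n/2\rfloor$ planes. By \eqref{eq:HSchordcomparison} and $D_{\mathrm{HS}}=\sqrt2D$, the chordal distance is at most this intrinsic value times the appropriate factor, which gives $\pi\sqrt{2\lfloor n/2\rfloor}\,r_n^m$ in $\dF$. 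Dividing by $\sqrt n$ gives $b_nr_n^m$. The bound $\sqrt2$ is the trivial bound from an independent coupling: $\E\rho^2=2-\frac2n\E\tr(U^{\mathsf T}V)=2$ against Haar. Finally, $r_n^m\le e^{-m/n}$ and $b_n\le\pi$ give the remaining inequalities, and the sufficient $m$ follows by solving $\pi e^{-m/n}\le\eps$.

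For $n=2$, a single rotation by a uniform angle composed with a uniform $P$ is exactly Haar on $\OO(2)$. By invariance, the output remains Haar for every $m\ge1$.

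\textbf{Main obstacle.} The step I expect to be hardest is making the triangular shift rigorous. The frame at step $k$ depends on earlier angles, so the shift $\phi_k$ must be measurable and bounded. The second-order errors must also be uniform, so that Oliveira's hypotheses (a local contraction uniform over base points) are satisfied. I would handle this by working with the exact geodesic $U\exp(\eps\Xi)$ and bounding the remainders with the smoothness of $\theta\mapsto G(\theta)$ on the compact torus.
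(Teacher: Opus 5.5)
Your architecture is the paper's: orthonormal angle frame, isotropic projection, triangular measure-preserving shift, Oliveira's local-to-global theorem, component-matched iteration from the diameter $\Delta_n$, and the independent coupling for $\sqrt2$. There is, however, a genuine gap in the globalization step.

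\textbf{The globalization gap.} Oliveira's theorem concerns a Feller kernel on a single Polish length space. Your $K$ sends each determinant component onto both components, and $D$ is defined only within a component, so $(\OO(n),D)$ is not a length space. The theorem therefore does not deliver $W_{2,D}(K(U,\cdot),K(V,\cdot))\le r_nD(U,V)$. Nor does it give the version for laws ``supported on one component'', because their images under $K$ are not.

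What your ``couple component by component'' step actually needs is, for each parity $\tau$, the bound $W_{2,D}(\alpha K_\tau,\beta K_\tau)\le r_nW_{2,D}(\alpha,\beta)$ for the kernel conditioned on $\det P=\tau$. Your local estimate takes $P$ uniform over all permutations, so it only controls the average over the two parities.

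The paper closes this in four steps:
\begin{enumerate}
\item It proves the isotropy identity conditionally on parity: the bijection $P\mapsto T_{ab}P$ swaps the parity classes and only flips the sign of $A'_{ab}$.
\item It derives the local estimate for each $K_\tau$.
\item It composes with the isometry $U\mapsto J_\tau U$ to get a Feller kernel on a single component, applies Oliveira's theorem there, and undoes the isometry.
\item It mixes over $\sigma,\tau$ by joint convexity of $W_2^2$ to obtain $\Wc(\alpha K,\beta K)\le r_n\Wc(\alpha,\beta)$.
\end{enumerate}
Alternatively, you could redo Oliveira's path-integration argument yourself with parity-matched couplings; in that case your unconditional estimate would suffice.

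\textbf{A false justification.} Your remark that, for fixed $\theta$, the $P$-average alone already gives exactly $2/n$ is not true. Write $\ip{PAP^{-1}}{B_\ell}^2=\bigl(\sum_{a\le\ell}(u_\ell)_aA'_{a,\ell+1}\bigr)^2$. The $P$-average of a cross term $A'_{a,\ell+1}A'_{b,\ell+1}$ with $a\ne b$ equals $\gamma(A)=(\|A\mathbf 1\|^2-\normF A^2)/(n(n-1)(n-2))$, where $\mathbf 1$ is the all-ones vector, and this is generally nonzero. Hence the fixed-$\theta$ average is $\frac2n\norms A^2+\gamma(A)\sum_\ell\bigl[(\mathbf 1^{\mathsf T}u_\ell)^2-1\bigr]$.

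For example, take $n=3$, $A=E_{12}+E_{13}$ and $\theta_1=\pi/4$. Then $u_2=(-1/\sqrt2,1/\sqrt2,0)$, and a direct check over the six permutations gives $1$, not $\frac2n\norms A^2=\frac43$.

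The cross terms vanish only after averaging over the angles, because $\E[u_\ell u_\ell^{\mathsf T}]$ is diagonal (Lemma~\ref{lem:cov}). Your coupling cost is averaged over $(P,\theta)$ jointly, so the factor $r_n^2$ survives once you invoke the joint identity of Lemma~\ref{lem:isotropy}. The reasoning you give, though, would not establish it.

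The remaining ingredients are fine: the diameter bound, $\dF\le\sqrt2D$, the independent-coupling bound $\sqrt2$, and the case $n=2$.
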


The notation $W_{2,\normF{\cdot}}$ in \eqref{eq:unnormalized} is the same as $W_{2,\dF}$. Equations \eqref{eq:upper} and \eqref{eq:unnormalized} are the same estimate in different units. At fixed normalized accuracy $\eps$, $n\log(\pi/\eps)$ sweeps suffice before rounding. At fixed absolute accuracy, the sufficient count is
\[
 n\log\frac{\pi\sqrt n}{\eps}
 =n\log\frac\pi\eps+\frac n2\log n.
\]
The additional term comes from reducing an extra factor $\sqrt n$ by exponential contraction. This explains the logarithm in the sufficient-time bound; it does not prove that the actual mixing times in the two metrics differ by a factor $\log n$. At normalized accuracy $\eps_n=n^{-\alpha}$ with fixed $\alpha>0$, the same bound already requires $O(n\log n)$ sweeps.

\begin{theorem}[Explicit lower bound]\label{thm:lower}
Let $n\geq3$, $m\geq1$, $k=\lfloor n/2\rfloor$, and define
\begin{align}
 B_n&=\min\left\{1,\,
 2^{\,n-k}\left(\frac7{16}\right)^{k(2n-k-1)/4}\right\},\label{eq:Bn}\\
 L_{n,m}&=(n!)^m(40m)^{m(n-1)},\qquad
 \eta_{n,m}=\min\{1,L_{n,m}B_n\}.\label{eq:eta}
\end{align}
Then, for every deterministic choice in \eqref{eq:process} and $p\geq1$,
\begin{equation}\label{eq:lower}
 W_{p,\rho}(\nu_{n,m}^{A,x},H_n)
 \geq\frac14(1-\eta_{n,m})^{1/p}.
\end{equation}
The constants are not optimized.
\end{theorem}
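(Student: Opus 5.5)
The argument is a support-entropy (covering) bound. After $m$ sweeps, $X_m$ is a deterministic function of the permutations $(P_1,\ldots,P_m)$ and the angles $\theta^{(1)},\ldots,\theta^{(m)}\in\T^{n-1}$. Call this map $\Phi$. The support $S=\Phi\bigl(S_n^m\times\T^{m(n-1)}\bigr)$ is then a union of $(n!)^m$ Lipschitz images of a torus of dimension $m(n-1)$.

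The first step is a Lipschitz estimate for $\Phi$. Each factor $R_{i,i+1}(\theta)$ depends on $\theta$ with $\normF{R_{i,i+1}(\theta)-R_{i,i+1}(\theta')}=2|\sin((\theta-\theta')/2)|\le|\theta-\theta'|$. The deterministic factors $A_t$ and the permutations are isometries for $\dF$. A telescoping argument over the $m(n-1)$ rotations therefore gives
\[
\dF(\Phi(\pi,\theta),\Phi(\pi,\theta'))\le\sum_{t,i}|\theta^{(t)}_i-\theta'^{(t)}_i|.
\]
Since each single rotation moves only two rows, this is also at most $\sqrt2$ times $\ell^1$ in a cruder form. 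Next, grid each angle with spacing $\delta$, using $N=\lceil 2\pi/\delta\rceil$ points per coordinate. Choosing $\delta$ so that $m(n-1)\delta/2\le\sqrt n\,r$ makes every point of $S$ lie within $\rho$-distance $r$ of a grid image. With $r$ of constant order (say $r=1/2$), we get $N\approx 4\pi m(n-1)/\sqrt n\lesssim 40m$ for a suitable normalization. So $S$ is covered by at most $L_{n,m}=(n!)^m(40m)^{m(n-1)}$ $\rho$-balls of radius $1/2$. I will tune the constants so that exactly $40m$ appears, e.g.\ by using the sharper bound $\rho$-Lipschitz constant $\sqrt{2/n}$ per angle, which holds because a rotation in one plane changes only two rows.

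The second step is the Haar small-ball estimate: for every $V\in\OO(n)$,
\[
H_n\bigl(\{U:\rho(U,V)\le 1/2\}\bigr)\le B_n.
\]
By invariance, take $V=I$. If $\normF{U-I}^2\le n/4$, then many columns satisfy $\|Ue_j-e_j\|^2$ small. More precisely, at least $n-k$ columns fail to be bad only if at least $k$ columns have $\|Ue_j-e_j\|^2\le 1/2$, i.e.\ $\langle Ue_j,e_j\rangle\ge 3/4$. A union bound over column subsets gives the factor $2^{n-k}$ (or a binomial count absorbed into it). Then generate the Haar columns sequentially. Conditional on the first $j-1$ columns, $Ue_j$ is uniform on the unit sphere of an $(n-j+1)$-dimensional subspace. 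The probability that its inner product with $e_j$ (projected) is at least $3/4$ is bounded by a spherical-cap/Gaussian tail estimate of the form $(7/16)^{(n-j)/2}$, using $1-(3/4)^2=7/16$. Multiplying over $j=1,\ldots,k$ gives the exponent $\sum_{j\le k}(n-j)/2=k(2n-k-1)/4$, matching $B_n$. This bookkeeping, making the conditional cap bound work with the exact constants and the subset count, is the main obstacle. I would first try the exact cap bound $\PP(\langle u,w\rangle\ge s)\le(1-s^2)^{(d-1)/2}$ for $u$ uniform on $S^{d-1}$.

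Finally, conclude with a transport argument. Let $S_r$ be the $r$-neighbourhood ($r=1/4$) of $S$ in $\rho$. It is contained in the union of $L_{n,m}$ balls of radius $1/2$ (covering $S$ at radius $1/4$ instead, after halving $\delta$ and adjusting constants), so $H_n(S_r)\le\eta_{n,m}$. Under any coupling $\pi$ of $\nu_{n,m}^{A,x}$ and $H_n$, we have $X\in S$ almost surely. Hence $\rho(X,Y)\ge1/4$ whenever $Y\notin S_{1/4}$, an event of probability at least $1-\eta_{n,m}$. This yields $\int\rho^p\,d\pi\ge 4^{-p}(1-\eta_{n,m})$, which is \eqref{eq:lower}. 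The trivial case $\eta=1$ is automatic.
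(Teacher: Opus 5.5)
\textbf{Gap in the covering step.} Your three-step structure (cover the support, bound Haar balls, separate via couplings) matches the paper's, and your final transport step is correct. The gap is in the covering step, which is where the explicit factor $(40m)^{m(n-1)}$ in $L_{n,m}$ comes from.

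Telescoping over the $m(n-1)$ individual rotations gives only an $\ell^1$ bound. A perturbation of one angle by $a$ moves the output by at most $\sqrt2\,a$ in $\dF$. (Note that $\normF{R(\theta)-R(\theta')}=2\sqrt2\,|\sin((\theta-\theta')/2)|$, since four entries change, not $2|\sin(\cdot)|$.) The triangle inequality therefore bounds one sweep's displacement only by $\sqrt{2/n}\,(n-1)a\approx\sqrt{2n}\,a$ in $\rho$.

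Your own count shows the consequence: $4\pi m(n-1)/\sqrt n$ is of order $m\sqrt n$, not $\lesssim 40m$. You would get a net of size $(n!)^m(Cm\sqrt n)^{m(n-1)}$. This exceeds $L_{n,m}$ for all but small $n$, so it proves a strictly weaker inequality than \eqref{eq:lower}. That weaker net would still suffice for the asymptotic Theorem~\ref{thm:stronglower}, but not for this explicit statement. Your suggested fix, a per-angle $\rho$-Lipschitz constant $\sqrt{2/n}$, is what you already used, and summing it in $\ell^1$ cannot remove the factor $\sqrt n$.

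The missing idea is that the displacements produced by different angles of one sweep are pointwise orthogonal. By Lemma~\ref{lem:frame}, $Q^{-1}\partial_{\theta_\ell}Q=B_\ell$ with $\ip{B_\ell}{B_j}=\mathbf 1_{\{\ell=j\}}$, so $\normF{\dd Q_\theta(v)}=\sqrt2\,\|v\|_2$. Integrating along the straight segment between coordinatewise shortest lifts gives $\rho(Q(\theta),Q(\phi))\le\sqrt{2/n}\,\|\theta-\phi\|_2\le\sqrt2\,a$ when every coordinate differs by at most $a$. You should then telescope only over the $m$ sweeps, where all multipliers are orthogonal. This gives $m\sqrt2\,a$, hence $q=\lceil\pi\sqrt2\,m/\delta\rceil<40m$ at $\delta=1/4$.

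\textbf{Bookkeeping error in the small-ball step.} The union bound over index sets costs $\binom nk\le2^n$. This cannot be absorbed into $2^{n-k}$: for $k=\lfloor n/2\rfloor$, $\binom nk$ is of order $2^n/\sqrt n$. In $B_n$ the factor $2^{n-k}$ is really $2^n\cdot2^{-k}$. The $2^{-k}$ comes from the sharper cap bound $\PP\{\langle Z,v\rangle\ge a\}\le\frac12(1-a^2)^{(d-1)/2}$ of Lemma~\ref{lem:cap}, which follows from $\overline\Phi(t)\le\frac12e^{-t^2/2}$ after writing $Z=G/\|G\|$. The bound without the $\frac12$ that you propose to try would give $2^n(7/16)^{k(2n-k-1)/4}$ rather than $B_n$. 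Apart from this, your reduction to at least $k$ indices with $U_{jj}\ge3/4$ and the sequential conditioning on Haar columns are correct.
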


\begin{theorem}[Sublinear-depth obstruction]\label{thm:stronglower}
Suppose $m=m_n\geq1$ and $m_n=o(n/\log n)$. Uniformly over the deterministic choices in \eqref{eq:process},
\begin{equation}\label{eq:stronglower}
 W_{1,\rho}(\nu_{n,m_n}^{A,x},H_n)\longrightarrow\sqrt2,
 \qquad
 W_{2,\rho}(\nu_{n,m_n}^{A,x},H_n)\longrightarrow\sqrt2.
\end{equation}
In particular, a number of sweeps of order $\log n$ does not produce full-matrix Haar approximation in either of these metrics.
\end{theorem}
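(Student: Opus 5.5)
The plan is to sharpen the support-entropy argument behind Theorem~\ref{thm:lower}, now working at a radius $\delta$ close to the diameter rather than at a fixed radius. Let $S_m$ be the support of $\nu=\nu_{n,m}^{A,x}$. It is the image of $(\T^{n-1})^m\times(\text{permutations})^m$ under a map that is Lipschitz in the angles. For the $\rho$ metric the Lipschitz constant is $O(m/\sqrt n)$, since each rotation moves only two columns. Consequently $S_m$ can be covered by at most $N\le (n!)^m(Cm/\delta)^{m(n-1)}$ balls of $\rho$-radius $\delta$, and $\log N=O(mn\log(n m/\delta))$. With $m=o(n/\log n)$ and any $\delta\ge n^{-O(1)}$, this gives $\log N=o(n^2)$. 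This is the same covering count that produces $L_{n,m}$ in \eqref{eq:eta}.

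Next I would bound the Haar measure of a small ball in a form that works even for radii close to $\sqrt2$. Fix $\eps>0$ and a center $U$. The claim is that $H_n\{V:\rho(U,V)\le\sqrt2-\eps\}\le e^{-c(\eps)n^2}$. Note that $\rho(U,V)^2=2-\frac2n\tr(U^{\mathsf T}V)$, and that $U^{\mathsf T}V$ is Haar distributed. So the claim is a large-deviation bound for $\frac1n\tr(W)$ under Haar: we need $\PP(\tr W\ge \eps' n)\le e^{-c n^2}$. I would prove this with the column-by-column conditional argument the paper uses for $B_n$. Conditionally on the first $j-1$ columns, the diagonal entry $W_{jj}$ is the projection of a uniform vector on the sphere in the orthogonal complement, which has dimension $n-j+1$. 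Each such entry is therefore subgaussian with variance proxy $1/(n-j+1)$. Summing the conditional moment generating functions over $j\le n/2$, together with the trivial bound $|W_{jj}|\le1$ for the rest, gives $\E e^{\lambda\tr W}\le e^{O(\lambda^2\log n)+\lambda n/2}$. This step looks too weak, so instead I would restrict to the first $k=\lfloor\eps'' n\rfloor$ columns after arranging them appropriately, or use Gaussian concentration on $\SO(n)$ with Lipschitz constant $\sqrt2$ applied to $\tr W$. Concentration gives $\PP(\tr W\ge t)\le 2e^{-c(n-2)t^2/ n}$ with $t=\eps'n$, which is $e^{-cn^2}$. This is the step where I would be most willing to invoke a known concentration inequality rather than stay fully elementary.

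Combining the two bounds, set $\delta=\eps/2$ and let $B=\{V:\rho(V,S_m)\le\delta\}$. Then $\nu(B)=1$, while
\[
H_n(B)\le N\max_U H_n\{\rho(U,\cdot)\le\sqrt2-\eps/2\}\le e^{o(n^2)-c n^2}\to0 .
\]
Take a near-optimal coupling $\pi$ of $\nu$ and $H_n$. For $(X,Y)\sim\pi$ with $Y\notin B'$, where $B'$ is the $(\sqrt2-\eps)$-neighbourhood of $S_m$, we get $\rho(X,Y)\ge\sqrt2-\eps$. Repeating the ball estimate with the adjusted radius shows $H_n(B')\to0$. Hence $W_{1,\rho}\ge(\sqrt2-\eps)(1-o(1))$. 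All bounds are uniform in the deterministic factors, because the covering bound and the Haar small-ball bound do not depend on them.

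For the upper side I would use $W_{1,\rho}\le W_{2,\rho}$ together with a matching upper bound. The main obstacle is here, because the limit $\sqrt2$ is not the diameter $2$. So I must show $W_{2,\rho}(\nu,H_n)\le\sqrt2+o(1)$ for every $\nu$. I would couple independently, $Y$ Haar independent of $X$. Then $\E\rho(X,Y)^2=2-\frac2n\E\tr(X^{\mathsf T}Y)=2$, because $\E Y=0$ under Haar on $\OO(n)$. This gives $W_{2,\rho}\le\sqrt2$, and the squeeze completes both limits in \eqref{eq:stronglower}.
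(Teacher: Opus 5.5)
Your proof is correct and has the same architecture as the paper's. It uses an $o(n^2)$ metric-entropy bound for the support, a Haar bound of order $e^{-c(\eps)n^2}$ for $\rho$-balls of radius $\sqrt2-\eps/2$ uniform in the center, the support-neighbourhood separation of Lemma~\ref{lem:separation} with $\delta=\eps/2$, and the independent coupling \eqref{eq:independentbound} for the matching bound $\sqrt2$.

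The genuine difference is the small-ball step. You invoke concentration of measure on $\SO(n)$ for $\tr W$. The paper deliberately avoids concentration and proves Lemma~\ref{lem:haarball} by essentially the alternative you mentioned and set aside:
\begin{itemize}
\item on $\{\tr U\geq tn\}$, at least $sn$ diagonal entries exceed a threshold $a$;
\item a union bound over at most $2^n$ index sets, together with permutation conjugation, moves $k$ of them to the first positions;
\item sequential spherical-cap bounds on the Haar columns then give $2^{n-k}(1-a^2)^{k(2n-k-1)/4}$.
\end{itemize}
This thresholding is exactly what repairs your conditional-MGF attempt, where the last $n/2$ diagonal entries could only be bounded trivially.

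For the present theorem the two routes are interchangeable, since any $c(\eps)>0$ in front of $n^2$ beats the $o(n^2)$ entropy. The elementary route buys self-containedness and explicit constants (Theorem~\ref{thm:lower}). Above all, it gives the exponent $-\tfrac3{16}n^2\log n$ at normalized radius $3\eps/\sqrt n$, which Corollary~\ref{cor:absolutemixing} needs. A sub-Gaussian tail for $\tr W$ at deviations $t\leq n$ can never beat $e^{-cn^2}$. So concentration alone would give only $\Omega(n/\log n)$, not $\Omega(n)$, at fixed absolute accuracy.

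Three small corrections:
\begin{itemize}
\item $\tr$ is $\sqrt n$-Lipschitz for $\dF$, not $\sqrt2$-Lipschitz (it is $\tr/\sqrt n$ that is $1$-Lipschitz). Your displayed tail $e^{-c(n-2)t^2/n}$ in fact already uses $\sqrt n$. The inequality should be applied on each determinant component of $\OO(n)$ separately, centred by $\E\tr W=0$ there.
\item The set $B$ is superfluous. The bound you display is really the one needed for $B'$.
\item The covering count $(Cm/\delta)^{m(n-1)}$ requires the orthonormal frame of Lemma~\ref{lem:frame}. The observation that each rotation acts on a $2\times2$ block only gives a per-angle constant $\sqrt{2/n}$, hence a grid of size $O(m\sqrt n/\delta)$. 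That grid still has logarithm $O(mn\log n)=o(n^2)$, so your argument is unaffected.
\end{itemize}
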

\begin{remark}[Interpretation of the limit $\sqrt2$]
The constant $\sqrt2$ is extremal for Wasserstein distance from Haar
measure in the normalized Frobenius metric. Indeed,
$W_{2,\rho}(\mu,H_n)\leq\sqrt2$ for every probability measure
$\mu$ on $\OO(n)$, while
\[
  W_{2,\rho}(\delta_V,H_n)=\sqrt2
  \qquad\text{for every }V\in\OO(n).
\]
Thus Theorem~\ref{thm:stronglower} says that when
$m=o(n/\log n)$, the output law is asymptotically no closer to Haar,
in whole-matrix normalized Wasserstein distance, than a deterministic
orthogonal matrix. This does not preclude substantial randomization of
particular marginals or observables.
\end{remark}
\begin{theorem}[Singularity]\label{thm:TV}
For $n\geq2$, if $m<n/2$, the law of \eqref{eq:process} is supported on a Haar-null compact subset of $\OO(n)$. Therefore
\begin{equation}\label{eq:TV}
 \|\nu_{n,m}^{A,x}-H_n\|_{\TV}=1.
\end{equation}
This includes $m=0$ with deterministic initial condition.
\end{theorem}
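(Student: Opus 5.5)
The plan is a dimension count. The law of $X_m$ is the pushforward of the product of uniform measures on $(\T^{n-1})^m\times(S_n)^m$ under the map
\[
 \Phi(\theta^{(1)},\ldots,\theta^{(m)},P_1,\ldots,P_m)
 =A_mQ(\theta^{(m)})P_m\cdots A_1Q(\theta^{(1)})P_1A_0x .
\]
For each fixed choice of the finitely many permutations $(P_1,\ldots,P_m)\in S_n^m$, the map $\theta\mapsto\Phi(\theta,P)$ is smooth, indeed real-analytic, from the compact manifold $\T^{m(n-1)}$ into $\OO(n)$. Its image $K_P$ is therefore compact. The support of $\nu_{n,m}^{A,x}$ is contained in the finite union
\[
 K=\bigcup_{P\in S_n^m}K_P ,
\]
which is compact, and $\nu_{n,m}^{A,x}(K)=1$.

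Next I would show that $H_n(K_P)=0$ whenever $m(n-1)<\dim\OO(n)=n(n-1)/2$, which is exactly the condition $m<n/2$. Haar measure on $\OO(n)$ is, on each component, a smooth positive density times the Riemannian volume of an $n(n-1)/2$-dimensional manifold. So it suffices to know that the smooth image of a manifold of strictly smaller dimension has zero volume. Two routes are available:
\begin{itemize}
\item \emph{Sard's theorem.} Every point of the domain is critical, since the rank of the differential is at most $m(n-1)<n(n-1)/2$. Hence the entire image consists of critical values and is null.
\item \emph{A direct Lipschitz covering argument,} which avoids citing Sard. The map $\theta\mapsto\Phi(\theta,P)$ is Lipschitz, say with constant $L$ for $\dF$, since each factor $R_{i,i+1}(\theta_i)$ is $1$-Lipschitz in $\theta_i$. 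Cover $\T^{m(n-1)}$ by $O(\delta^{-m(n-1)})$ cubes of side $\delta$. The image is then covered by that many $\dF$-balls of radius $O(L\delta)$. A ball of radius $r$ in $\OO(n)$ has Haar measure at most $C_nr^{n(n-1)/2}$ for small $r$, by the smooth-manifold structure and compactness. The total measure is therefore $O(\delta^{n(n-1)/2-m(n-1)})\to0$.
\end{itemize}
I would present the covering argument as the main proof, with the Sard remark as an aside. A finite union of null sets is null, so $H_n(K)=0$.

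Finally, take $B=K$ in the definition of total variation: $\nu(K)-H_n(K)=1-0=1$. Total variation is bounded by $1$, so equality holds.

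Two edge cases need a separate word:
\begin{itemize}
\item \emph{$m=0$.} The law is $\delta_{A_0x}$, a point mass, and a single point is Haar-null because $n(n-1)/2\geq1$ for $n\geq2$.
\item \emph{$n=2$.} The condition $m<n/2=1$ forces $m=0$, so this case reduces to the previous one.
\end{itemize}

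The main obstacle, mild as it is, is justifying the small-ball bound $H_n(\text{ball of radius }r)\leq C_nr^{n(n-1)/2}$ without the geometric machinery of Section~\ref{sec:geometry}. I would obtain it from local charts. The exponential map $E\mapsto V\exp(E)$ on skew matrices is a diffeomorphism near $0$ with Jacobian bounded uniformly in $V$ by left-invariance. This makes Haar measure comparable to Lebesgue measure on $\R^{n(n-1)/2}$ in small balls, which gives the required power of $r$. Only a fixed-$n$ bound is needed here, so no constants have to be tracked.
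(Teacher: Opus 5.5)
Your proposal is correct and follows essentially the same route as the paper. You fix the permutation branch and treat the output as a Lipschitz map from the $m(n-1)$-dimensional angle torus; a grid covering combined with a chart-based small-ball bound of order $r^{n(n-1)/2}$ makes each branch Haar-null when $m<n/2$, and a finite union over the $(n!)^m$ branches gives total variation one. One harmless slip: in $\dF$ each plane rotation is $\sqrt2$-Lipschitz in its angle, since $\normF{R(\theta)-R(\phi)}=2\sqrt2\,|\sin((\theta-\phi)/2)|$, not $1$-Lipschitz as you state; this is immaterial because any finite constant suffices.
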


For a fixed deterministic sequence $A=(A_t)_{t\geq0}$, define the sweep mixing time
\[
 t_n^A(\eps)=\min\{m\geq0:
 \sup_{x\in\OO(n)}W_{2,\rho}(\nu_{n,m}^{A,x},H_n)\leq\eps\}.
\]
This definition also covers the homogeneous walk by taking every $A_t=I$. The distance is nonincreasing in time: using the same random sweep in two chains preserves their pointwise $\rho$ distance, while a deterministic orthogonal factor is an isometry, and each operation preserves Haar measure.

\begin{corollary}[A logarithmic gap at fixed normalized accuracy]\label{cor:mixing}
For every $0<\eps<\sqrt2$, there are $c_\eps>0$ and $n_\eps<\infty$ such that for all $n\geq n_\eps$ and every fixed deterministic insertion sequence,
\begin{equation}\label{eq:mixinggap}
 c_\eps\frac n{\log n}
 \leq t_n^A(\eps)
 \leq\left\lceil n\log\frac\pi\eps\right\rceil.
\end{equation}
The lower bound is also valid when, for each proposed total number of sweeps, the location of the single Fourier transform is chosen to minimize the distance. The upper bound holds at every location.
\end{corollary}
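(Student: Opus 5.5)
The upper bound in \eqref{eq:mixinggap} comes directly from Theorem~\ref{thm:upper}. For $m\geq\lceil n\log(\pi/\eps)\rceil$ we have $\pi e^{-m/n}\leq\eps$. That bound is uniform over the initial matrix $x\in\OO(n)$ and over every deterministic insertion sequence, so in particular it holds at every location of the Fourier factor. Taking the supremum over $x$ therefore gives $t_n^A(\eps)\leq\lceil n\log(\pi/\eps)\rceil$ for all $n\geq3$.

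The lower bound comes from Theorem~\ref{thm:lower}. The step is to show that for $m\leq c\,n/\log n$ with $c=c_\eps$ small, $\eta_{n,m}\to0$. Then the right side of \eqref{eq:lower} with $p=2$, namely $\frac14(1-\eta_{n,m})^{1/2}$, is at least $1/8$ for large $n$.

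This alone only handles $\eps<1/8$. For general $\eps<\sqrt2$ I would use Theorem~\ref{thm:stronglower}, which is uniform over deterministic choices. Suppose the lower bound failed. Then for some $c>0$ there are infinitely many $n$, times $m_n<c\,n/\log n$, and insertion sequences (with the Fourier location chosen freely) such that $\sup_x W_{2,\rho}\leq\eps$. To contradict Theorem~\ref{thm:stronglower} we need $m_n=o(n/\log n)$, so $c$ must be allowed to depend on $\eps$.

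This suggests running the entropy count quantitatively instead. I would rework the proof of Theorem~\ref{thm:lower} at a small radius $\delta$ in place of the constant behind $1/4$, giving $W_{2,\rho}\geq(\sqrt2-\delta')(1-\eta)^{1/2}$. Here $\eta$ is the Haar mass of a $\delta$-neighbourhood of the support. That support is covered by about $(n!)^m(C/\delta)^{m(n-1)}$ balls. Each ball has Haar mass at most roughly $e^{-c(\delta)n^2}$, with $c(\delta)>0$ for every fixed $\delta<\sqrt2$. The cover is negligible once $m(n\log n+n\log(C/\delta))\leq\frac12c(\delta)n^2$, that is, for $m\leq c_\eps n/\log n$. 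Then $W_{2,\rho}\geq\sqrt2-o_\delta(1)>\eps$ after choosing $\delta$ small in terms of $\sqrt2-\eps$, so the mixing time exceeds $c_\eps n/\log n$ for $n\geq n_\eps$.

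The main obstacle is this last step: making the Haar small-ball bound quantitative near the extremal value $\sqrt2$. A subsequence argument alone will not close it, because Theorem~\ref{thm:stronglower} needs $m_n=o(n/\log n)$ and not $m_n\leq c\,n/\log n$.

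The Fourier location is handled for free. The bounds depend only on the counting of angles and permutations and are uniform over all deterministic insertions, so minimizing over the location of $F_n$ for each $m$ does not affect them. The same holds for the supremum over $x$ in the definition of $t_n^A$.
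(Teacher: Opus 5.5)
Your final route is the paper's proof: fix $r\in(\eps,\sqrt2)$ and a net radius $\delta$ with $R=r+\delta<\sqrt2$, cover the support with Lemma~\ref{lem:cover}, bound each $R$-ball by Lemma~\ref{lem:haarball}, and apply Lemma~\ref{lem:separation} to the $r$-neighbourhood for all $m\leq c_\eps n/\log n$; you handle the upper bound and the uniformity over $x$, the insertions and the Fourier location exactly as the paper does. The ``main obstacle'' you flag is already settled by \eqref{eq:ballasymptotic}, which gives $c(R,a)>0$ for every fixed $R<\sqrt2$, and only a fixed $R\in(\eps,\sqrt2)$ is needed, so the detours through Theorems~\ref{thm:lower} and~\ref{thm:stronglower} are unnecessary; two bookkeeping fixes remain: keep the net radius $\delta$ distinct from the ball radius $R$ (the neighbourhood whose Haar mass you bound has radius $r=R-\delta$, not $\delta$), and note that the mesh $\pi\sqrt2\,m/\delta$ in Lemma~\ref{lem:cover} makes the entropy $mn(2\log n+C_\delta)$ rather than $mn(\log n+O(1))$, which changes only the constant $c_\eps$.
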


The corresponding absolute-accuracy mixing time is
\[
 t_{n,\mathrm F}^{A}(\eps)=\min\{m\geq0:
 \sup_{x\in\OO(n)}W_{2,\dF}(\nu_{n,m}^{A,x},H_n)\leq\eps\}.
\]
For every $\eps>0$, \eqref{eq:Wscaling} gives exactly
\begin{equation}\label{eq:mixingtimescaling}
 t_{n,\mathrm F}^{A}(\eps)=t_n^A(\eps/\sqrt n),
 \qquad t_n^A(\eps)=t_{n,\mathrm F}^{A}(\eps\sqrt n).
\end{equation}
The next conclusion uses the support bound at shrinking normalized radii, not just Corollary~\ref{cor:mixing}, whose constants were for fixed normalized accuracy.

\begin{corollary}[A logarithmic gap at fixed absolute accuracy]\label{cor:absolutemixing}
There is a universal constant $c_*>0$, for example $c_*=1/100$, with the following property. For every fixed $\eps>0$ there exists $n_\eps<\infty$ such that for all $n\geq n_\eps$ and all fixed deterministic insertion sequences,
\begin{equation}\label{eq:absolutemixinggap}
 c_*n\leq t_{n,\mathrm F}^{A}(\eps)
 \leq\left\lceil n\log\frac{\pi\sqrt n}{\eps}\right\rceil.
\end{equation}
More precisely, the proof shows $W_{1,\dF}(\nu_{n,m}^{A,x},H_n)>\eps$ for every $1\leq m\leq c_*n$ and every $x$ when $n$ is sufficiently large. The lower bound is uniform over insertions selected in advance for each proposed total sweep count, including optimization of the single Fourier factor's location. The upper bound holds at every location.
\end{corollary}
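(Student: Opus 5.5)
The upper bound in \eqref{eq:absolutemixinggap} follows at once from Theorem~\ref{thm:upper}. By \eqref{eq:unnormalized}, $W_{2,\dF}(\nu_{n,m}^{A,x},H_n)\leq\pi\sqrt n\,e^{-m/n}$, and this is at most $\eps$ as soon as $m\geq n\log(\pi\sqrt n/\eps)$. Equivalently, one may combine \eqref{eq:mixingtimescaling} with the upper half of Corollary~\ref{cor:mixing} at accuracy $\eps/\sqrt n$, interpreting the bound at any normalized accuracy. The case $\eps\geq\pi\sqrt n$ is trivial. This bound holds at every location of the Fourier factor because Theorem~\ref{thm:upper} is uniform over deterministic insertions.

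For the lower bound, I would not use Theorem~\ref{thm:lower} directly, since its constant $1/4$ concerns a fixed normalized radius. Instead I would rerun its support-entropy argument at normalized radius $r=\delta/\sqrt n$, which corresponds to absolute radius $\delta$. That argument has two inputs. The first is a covering bound for the support $S_m$ of $\nu_{n,m}^{A,x}$. The map $(\theta^{(1)},P_1,\dots,\theta^{(m)},P_m)\mapsto X_m$ is Lipschitz in the angles: each elementary rotation moves the product by at most $\sqrt2|\Delta\theta|$ in $\dF$, so the total motion is at most $\sqrt2\,\|\Delta\theta\|_1$. Hence $S_m$ is covered by $(n!)^m(C\sqrt{mn}/\delta)^{m(n-1)}$ balls of $\dF$-radius $\delta$. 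The second input is the Haar small-ball estimate from the proof of Theorem~\ref{thm:lower}, applied at $\dF$-radius $2\delta$. The column-by-column conditional Gaussian argument gives $H_n(B_{\dF}(V,2\delta))\leq (C'\delta/\sqrt n)^{cn^2}$ for some absolute $c>0$, roughly $n^2/4$. Here each of the $k\approx n/2$ columns contributes a factor of order $(\delta/\sqrt n)^{n-j}$, because its component orthogonal to previous constraints must lie in a ball of radius $O(\delta)$ inside a sphere of dimension $n-j$.

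With these two inputs, the $2\delta$-neighborhood $N$ of $S_m$ has Haar mass
\[
\eta\leq \exp\Bigl(m\,n\log n+m(n-1)\log\bigl(C\sqrt{mn}/\delta\bigr)-c n^2\log\bigl(\sqrt n/(C'\delta)\bigr)\Bigr).
\]
For $m\leq c_*n$ with $c_*$ small (e.g. $1/100$), the first two terms are at most about $c_*n^2(\tfrac32\log n+O_\delta(1))$, while the last is about $\tfrac c2 n^2\log n$. So $\eta\to0$ uniformly over $x$, $A$ and $1\le m\le c_*n$. Then any coupling of $\nu_{n,m}^{A,x}$ with $H_n$ transports mass at least $1-\eta$ over $\dF$-distance at least $2\delta$, so $W_{1,\dF}\geq2\delta(1-\eta)>\eps$ once we choose $\delta=\eps$ and $n$ is large. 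Since $W_{2,\dF}\geq W_{1,\dF}$, this gives $t^A_{n,\mathrm F}(\eps)>c_*n$.

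Uniformity over insertions, including the optimized Fourier location, holds because the covering count depends only on $m$ and $n$, and deterministic factors are isometries. The main obstacle is tracking the exponent in the Haar small-ball bound at shrinking radius. One must verify that the exponent constant $c$ beats $\tfrac32c_*$, and that the $\log n$ terms (from $\sqrt n/\delta$ on the Haar side and from $n!$ and the Lipschitz constant on the support side) balance in favor of the Haar side. This requires re-deriving the estimate of Theorem~\ref{thm:lower} with explicit $r$-dependence rather than reading off $B_n$.
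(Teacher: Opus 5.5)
Your route is the paper's: the upper bound comes from \eqref{eq:unnormalized}, and the lower bound combines a net of the support at absolute radius of order $\eps$ (Lemma~\ref{lem:cover} with normalized radius $\eps/\sqrt n$), a Haar small-ball bound at fixed absolute radius, and Lemma~\ref{lem:separation}. The small-ball bound you state is correct, but the reason you give for it would not produce it.

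\textbf{The small-ball exponent.} Suppose all you use is that each column of $U$ lies within $\dF$-distance $O(\delta)$ of the corresponding column of $V$. Then, conditionally, each column must fall in a cap of \emph{fixed} chordal radius on a sphere of dimension $n-j$. Such a cap has mass at most of order $(C\delta)^{n-j}$, with no factor $n^{-1/2}$, and the constraint is vacuous once $2\delta\geq2$. That yields only $H_n(\mathrm{ball})\leq e^{-O(n^2)}$. Against a support entropy of order $mn\log n$, this excludes only $m\lesssim n/\log n$, which reproduces the normalized result rather than the linear bound.

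The missing step is an averaging argument. For a ball of radius $2\delta$ we have $\sum_j\|Ue_j-Ve_j\|_2^2\leq4\delta^2$, so at least half of the columns satisfy $\|Ue_j-Ve_j\|_2^2\leq8\delta^2/n$. Only for those columns is the cap of radius $O(\delta/\sqrt n)$, giving the factor $(C\delta/\sqrt n)^{n-j}$. A union bound over at most $2^n$ index sets then finishes the estimate.

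Lemma~\ref{lem:haarball} does exactly this through the trace and the diagonal entries. Take $R_n=3\eps/\sqrt n$ and $a_n=1-9\eps^2/n$. Then $s=1/2$ and $1-a_n^2\leq18\eps^2/n$, and the lemma's finite-$n$ formula already gives log-mass at most $-\frac3{16}n^2\log n+O_\eps(n^2)$. So no re-derivation is needed; you only need to let $a_n\to1$ at rate $1/n$.

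\textbf{Two smaller slips.}

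First, a $\delta$-net together with mass bounds on balls of radius $2\delta$ covers only the $\delta$-neighborhood of the support, since Lemma~\ref{lem:separation} gives $R-\delta$. You therefore get $W_{1,\dF}\geq\delta(1-\eta)$, which at $\delta=\eps$ is not $>\eps$. Use balls of radius $3\delta$, as the paper does, or take $\delta=2\eps$.

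Second, your $\ell_1$ Lipschitz bound $\sqrt2\|\Delta\theta\|_1$ requires of order $mn/\delta$ grid points per angle, not $\sqrt{mn}/\delta$. The paper instead uses the per-sweep isometry of Lemma~\ref{lem:frame} with telescoping, which gives $\lceil\pi\sqrt2\,m\sqrt n/\eps\rceil$ points per angle. Either way the entropy is $O(mn\log n)$: about $3mn\log n$ with your grid, versus $\frac52mn\log n+C_\eps mn$ in the paper. With $c_*=1/100$, this stays well below $\frac3{16}n^2\log n$.
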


Thus the established fixed-accuracy ranges are $[\Omega(n/\log n),O(n)]$ in normalized distance and $[\Omega(n),O(n\log n)]$ in absolute distance. Both leave a logarithmic factor between lower and upper bounds; neither identifies a ratio between the two actual mixing times.

We prove the upper bound in Sections~\ref{sec:geometry}--\ref{sec:upperproof}, the lower bounds in Sections~\ref{sec:cover}--\ref{sec:lowerproof}, and singularity in Section~\ref{sec:singular}. The absolute-accuracy proof is in Section~\ref{subsec:absoluteproof}. Section~\ref{sec:extensions} treats determinant correction and ordered-tree sweeps.

\section{Tangent coordinates and the angle-derivative frame}\label{sec:geometry}

A probabilistic coupling of two nearby orthogonal matrices is easiest to describe in infinitesimal coordinates. A differentiable path through the identity in $\OO(n)$ has derivative $A$ satisfying $A^{\mathsf T}=-A$. We therefore write
\[
 \so(n)=\{A\in\R^{n\times n}:A^{\mathsf T}=-A\}.
\]
This is simply the vector space of real skew-symmetric matrices; its dimension is $n(n-1)/2$. We measure such infinitesimal displacements with
\begin{equation}\label{eq:ip}
 \ip AB=\tfrac12\tr(A^{\mathsf T}B),\qquad
 \norms A^2=\tfrac12\normF A^2.
\end{equation}
The factor $1/2$ is convenient because the elementary generators $E_{ij}$ then form an orthonormal basis, and $\norms A^2=\sum_{i<j}A_{ij}^2$. If $U$ is orthogonal, the change of coordinates $A\mapsto UAU^{\mathsf T}$ preserves this norm.

The same tangent norm defines a path distance $D$ within each determinant component. Concretely, a smooth path $\gamma$ has length
$\int\norms{\gamma(t)^{-1}\gamma'(t)}\,\dd t$, and $D(U,V)$ is the infimum of these lengths over paths joining $U$ to $V$. Left or right multiplication by a fixed orthogonal matrix does not change path lengths; this is the only meaning of ``bi-invariant'' that we use. For equal-determinant $U,V$,
\begin{equation}\label{eq:metriccomparison}
 \rho(U,V)\leq\sqrt{\frac2n}\,D(U,V),
 \qquad D(U,V)\leq\Delta_n:=\pi\sqrt{\lfloor n/2\rfloor}.
\end{equation}
The first inequality follows by bounding the Euclidean chord by the length of a curve. For the second, the real spectral decomposition of $UV^{-1}\in\SO(n)$ gives a skew-symmetric logarithm with at most $\lfloor n/2\rfloor$ rotation angles in $[-\pi,\pi]$. The exponential path of that logarithm has length at most $\Delta_n$.

We will also use the standard local identity
\[
 D(I,\exp A)=\norms A
\]
for sufficiently small skew-symmetric $A$. Intuitively, $t\mapsto\exp(tA)$ is the constant-speed shortest path starting at the identity in direction $A$ when $A$ is small. This familiar local fact for the orthogonal group is the only differential-geometric input needed below; translation gives the same statement near every point of either determinant component.

The sweep has two complementary properties. Its angle derivatives, transported to the identity, are orthogonal for every fixed angle vector. Their squared projection coefficients become isotropic after averaging over the angles and the coordinate permutation, even conditional on permutation parity. The first property will identify the displacement removed by the coupling; the second will determine its mean squared size.

Write $E_\ell=E_{\ell,\ell+1}$, and define partial sweeps and their angle directions by
\begin{equation}\label{eq:Bdef}
 S_0=I,\qquad S_\ell=e^{\theta_\ell E_\ell}\cdots e^{\theta_1E_1},
 \qquad B_\ell=S_{\ell-1}^{-1}E_\ell S_{\ell-1}.
\end{equation}
In particular, $B_\ell$ depends only on $\theta_1,\ldots,\theta_{\ell-1}$, not on the current or later angles. This dependence on preceding angles will also make the coupling measure preserving.

The quantity $B_\ell$ has a simple operational interpretation. The derivative $\partial_{\theta_\ell}Q$ is a tangent vector at the current matrix $Q(\theta)$. Multiplying it on the left by $Q(\theta)^{-1}$ transports that tangent vector back to the identity, where tangent vectors can be identified with ordinary skew-symmetric matrices. Lemma~\ref{lem:frame} shows that these transported derivatives are exactly the matrices $B_\ell$, and that they are mutually orthogonal.

\begin{lemma}[Orthogonal frame]\label{lem:frame}
For every angle vector,
\begin{equation}\label{eq:derivative}
 Q(\theta)^{-1}\frac{\partial Q}{\partial\theta_\ell}(\theta)=B_\ell,
 \qquad \ip{B_\ell}{B_j}=\mathbf1_{\{\ell=j\}}.
\end{equation}
Consequently, for every $v\in\R^{n-1}$,
\begin{equation}\label{eq:flatdifferential}
 \normF{\dd Q_\theta(v)}^2=2\|v\|_2^2.
\end{equation}
\end{lemma}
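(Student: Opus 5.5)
The proof has three steps: compute $Q^{-1}\partial_{\theta_\ell}Q$ directly from the product structure, show orthonormality of the $B_\ell$ by a support argument, and read off \eqref{eq:flatdifferential} from left invariance of the Frobenius norm.

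\textbf{Derivative formula.} Write $Q(\theta)=T_\ell\,e^{\theta_\ell E_\ell}S_{\ell-1}$, where $T_\ell=e^{\theta_{n-1}E_{n-1}}\cdots e^{\theta_{\ell+1}E_{\ell+1}}$. Only the middle factor depends on $\theta_\ell$, so
\[
 \partial_{\theta_\ell}Q=T_\ell e^{\theta_\ell E_\ell}E_\ell S_{\ell-1}.
\]
Multiplying on the left by $Q^{-1}=S_{\ell-1}^{-1}e^{-\theta_\ell E_\ell}T_\ell^{-1}$ gives $S_{\ell-1}^{-1}E_\ell S_{\ell-1}=B_\ell$. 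This uses that $E_\ell$ commutes with $e^{\theta_\ell E_\ell}$.

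\textbf{Orthonormality.} Conjugation by an orthogonal matrix preserves $\ip\cdot\cdot$, so $\norms{B_\ell}=\norms{E_\ell}=1$. For $j<\ell$, write $S_{\ell-1}=C\,S_{j}$ with $C=e^{\theta_{\ell-1}E_{\ell-1}}\cdots e^{\theta_{j+1}E_{j+1}}$. Also $B_j=S_{j-1}^{-1}E_jS_{j-1}=S_j^{-1}E_jS_j$, because $e^{\theta_jE_j}$ commutes with $E_j$. Conjugating both matrices by $S_j$ then gives
\[
 \ip{B_\ell}{B_j}=\ip{C^{-1}E_\ell C}{E_j}.
\]
The factor $C$ only involves the planes $(j+1,j+2),\dots,(\ell-1,\ell)$. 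Hence $C$, and so $C^{-1}$, acts as the identity on $e_1,\dots,e_{j+1}$ and preserves $\mathrm{span}(e_{j+1},\dots,e_\ell)$. It also fixes $e_{\ell+1}$ and everything beyond.

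Consequently $C^{-1}E_\ell C=(C^{-1}e_\ell)e_{\ell+1}^{\mathsf T}-e_{\ell+1}(C^{-1}e_\ell)^{\mathsf T}$ has nonzero entries only in rows and columns involving index $\ell+1$. The matrix $E_j$ has nonzero entries only at positions $(j,j+1)$ and $(j+1,j)$. Since $j+1\le\ell<\ell+1$, neither of these positions involves index $\ell+1$. The supports are therefore disjoint and $\tr(A^{\mathsf T}B)=0$.

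\textbf{Differential.} We have $\dd Q_\theta(v)=\sum_\ell v_\ell\,\partial_{\theta_\ell}Q=Q\sum_\ell v_\ell B_\ell$. Left multiplication by the orthogonal matrix $Q$ preserves $\normF\cdot$, and $\normF A^2=2\norms A^2$. Orthonormality then gives $\normF{\dd Q_\theta(v)}^2=2\|v\|_2^2$.

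The only step needing care is the bookkeeping that $C^{-1}e_\ell$ stays in $\mathrm{span}(e_{j+1},\dots,e_\ell)$. It rests on the adjacency of the planes: $C$ never touches coordinate $\ell+1$.
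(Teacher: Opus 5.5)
Your proof is correct and essentially the paper's: the derivative and the Frobenius identity are computed the same way, and orthogonality is the same support-disjointness in row/column $\ell+1$. The only difference is that you first conjugate by the common prefix $S_j$ to reduce $B_j$ to $E_j$, whereas the paper reads off $B_\ell=u_\ell e_{\ell+1}^{\mathsf T}-e_{\ell+1}u_\ell^{\mathsf T}$ directly, with $u_\ell=S_{\ell-1}^{-1}e_\ell\in\operatorname{span}(e_1,\ldots,e_\ell)$, a form it reuses in Lemmas~\ref{lem:cov} and~\ref{lem:isotropy}. One harmless slip: when $\ell\ge j+2$, $C$ contains the rotation in the plane $(j+1,j+2)$, so it fixes $e_1,\ldots,e_j$ but not $e_{j+1}$; your argument uses only that $C$ fixes $e_{\ell+1}$.
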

\begin{proof}
Write $R_j=e^{\theta_jE_j}$ and
$T_\ell=R_{n-1}\cdots R_{\ell+1}$, with an empty product equal to $I$.
Then $Q=T_\ell R_\ell S_{\ell-1}$, so differentiation and cancellation give
\[
 \begin{aligned}
 Q^{-1}\partial_{\theta_\ell}Q
 &=S_{\ell-1}^{-1}R_\ell^{-1}T_\ell^{-1}
       T_\ell E_\ell R_\ell S_{\ell-1}\\
 &=S_{\ell-1}^{-1}E_\ell S_{\ell-1}=B_\ell.
 \end{aligned}
\]
Only the commutation of $E_\ell$ with its own exponential is used; different rotation factors have not been interchanged.

Set $u_\ell=S_{\ell-1}^{-1}e_\ell$. The prefix $S_{\ell-1}$ acts only on the first $\ell$ coordinates and fixes $e_{\ell+1}$. Thus
\begin{equation}\label{eq:wedge}
 B_\ell=u_\ell e_{\ell+1}^{\mathsf T}
 -e_{\ell+1}u_\ell^{\mathsf T},\qquad
 u_\ell\in\operatorname{span}(e_1,\ldots,e_\ell),\quad \|u_\ell\|=1.
\end{equation}
Equivalently,
\[
 B_\ell=\sum_{a=1}^{\ell}(u_\ell)_aE_{a,\ell+1}.
\]
The basis elements appearing in this sum have larger index $\ell+1$, whereas those appearing in $B_j$ have larger index $j+1$. For $j\ne\ell$ the two sets are disjoint. Orthonormality of $(E_{ab})_{a<b}$ therefore gives
\[
 \ip{B_\ell}{B_j}=0\quad(j\ne\ell),\qquad
 \norms{B_\ell}^2=\sum_{a=1}^{\ell}(u_\ell)_a^2=1.
\]
This is pointwise orthogonality, not merely orthogonality in expectation.
Finally, $\dd Q_\theta(v)=Q\sum_\ell v_\ell B_\ell$, whence
\[
 \normF{\dd Q_\theta(v)}^2
 =2\norms{\sum_\ell v_\ell B_\ell}^{\!2}
 =2\sum_\ell v_\ell^2,
\]
proving \eqref{eq:flatdifferential}.
\end{proof}

\begin{lemma}[Diagonal covariance of the frame coefficients]\label{lem:cov}
For independent uniform angles and every $\ell$,
\begin{equation}\label{eq:cov}
 \E[(u_\ell)_a(u_\ell)_b]=0\quad(a\ne b),\qquad
 \sum_{a=1}^{\ell}\E[(u_\ell)_a^2]=1.
\end{equation}
\end{lemma}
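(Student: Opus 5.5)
The plan is to use the explicit recursive description of $u_\ell=S_{\ell-1}^{-1}e_\ell$ and exploit that the last angle $\theta_{\ell-1}$ enters through a single plane rotation whose effect can be sign-flipped.

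The second identity in \eqref{eq:cov} is immediate. By \eqref{eq:wedge}, $\|u_\ell\|=1$ pointwise and $u_\ell$ is supported on coordinates $1,\ldots,\ell$. Hence $\sum_a(u_\ell)_a^2=1$ identically, and we take expectations.

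For the off-diagonal vanishing, compute $u_\ell$ explicitly. Since $S_{\ell-1}=e^{\theta_{\ell-1}E_{\ell-1}}S_{\ell-2}$, we have
\[
 u_\ell=S_{\ell-2}^{-1}e^{-\theta_{\ell-1}E_{\ell-1}}e_\ell .
\]
The rotation $e^{-\theta_{\ell-1}E_{\ell-1}}$ acts on $e_\ell$ inside the plane $(\ell-1,\ell)$. This gives, up to the sign convention of the block,
\[
 u_\ell=\cos\theta_{\ell-1}\,e_\ell\pm\sin\theta_{\ell-1}\,S_{\ell-2}^{-1}e_{\ell-1}
 =\cos\theta_{\ell-1}\,e_\ell\pm\sin\theta_{\ell-1}\,u_{\ell-1},
\]
where we used that $S_{\ell-2}$ fixes $e_\ell$. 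Here $u_{\ell-1}$ depends only on $\theta_1,\ldots,\theta_{\ell-2}$ and lies in $\operatorname{span}(e_1,\ldots,e_{\ell-1})$. Iterating gives the explicit product form
\[
 (u_\ell)_a=\pm\cos\theta_{a-1}\prod_{j=a}^{\ell-1}\sin\theta_j ,
\]
with the convention $\cos\theta_0=1$.

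Now take $a<b\le\ell$. Then
\[
 (u_\ell)_a(u_\ell)_b=\pm\cos\theta_{a-1}\cos\theta_{b-1}\Bigl(\prod_{j=a}^{b-1}\sin\theta_j\Bigr)\Bigl(\prod_{j=b}^{\ell-1}\sin^2\theta_j\Bigr).
\]
The factor $\sin\theta_{b-1}$ appears exactly once, multiplied by $\cos\theta_{b-1}$. Every other factor is independent of $\theta_{b-1}$, because $a-1<b-1$ and the remaining sine indices differ from $b-1$. Since $\E[\sin\theta\cos\theta]=\tfrac12\E[\sin2\theta]=0$ for uniform $\theta$, independence gives expectation zero.

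Even more simply, one can argue by symmetry: the map $\theta_{b-1}\mapsto\pi-\theta_{b-1}$ preserves the uniform law and flips the sign of $\cos\theta_{b-1}$ while fixing everything else. The case $b=\ell$ ($j$-range empty) is covered, since $(u_\ell)_\ell=\cos\theta_{\ell-1}$ and the other coordinate carries $\sin\theta_{\ell-1}$.

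The only real care needed is the sign and index bookkeeping in the recursion. In particular, we must check that $S_{\ell-2}$ fixes both $e_\ell$ and the relevant plane, so that the recursion holds as stated.
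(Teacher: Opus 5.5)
Your proof is correct and follows the paper's route: the same recursion $u_\ell=-\sin\theta_{\ell-1}\,u_{\ell-1}+\cos\theta_{\ell-1}\,e_\ell$ (the sign is $-$, giving $(u_\ell)_a=(-1)^{\ell-a}\cos\theta_{a-1}\prod_{j=a}^{\ell-1}\sin\theta_j$ with $\cos\theta_0:=1$), together with independence and $\E[\sin\theta\cos\theta]=0$. The only difference is that the paper runs this as the second-moment induction $M_\ell=\tfrac12M_{\ell-1}+\tfrac12e_\ell e_\ell^{\mathsf T}$ rather than unrolling it into your product formula, and your closing remark needs one correction: only $S_{\ell-2}e_\ell=e_\ell$ is needed, since $S_{\ell-2}$ does not fix $e_{\ell-1}$, which is exactly why $S_{\ell-2}^{-1}e_{\ell-1}=u_{\ell-1}$ appears.
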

\begin{proof}
We have $u_1=e_1$. For $\ell\geq2$,
$S_{\ell-1}=R_{\ell-1,\ell}(\theta_{\ell-1})S_{\ell-2}$.
The inverse rotation sends $e_\ell$ to
$-\sin\theta_{\ell-1}\,e_{\ell-1}+\cos\theta_{\ell-1}\,e_\ell$,
and $S_{\ell-2}$ fixes $e_\ell$. Hence
\[
 u_\ell=-\sin\theta_{\ell-1}\,u_{\ell-1}
          +\cos\theta_{\ell-1}\,e_\ell.
\]
Let $M_\ell=\E[u_\ell u_\ell^{\mathsf T}]$, regarding all these matrices as $n\times n$ matrices with zeros on unused coordinates. Since $u_{\ell-1}$ depends only on angles preceding $\theta_{\ell-1}$, expansion of the outer product gives
\[
 \begin{aligned}
 M_\ell
 &=\E[\sin^2\theta_{\ell-1}]M_{\ell-1}
   +\E[\cos^2\theta_{\ell-1}]e_\ell e_\ell^{\mathsf T}\\
 &\quad-\E[\sin\theta_{\ell-1}\cos\theta_{\ell-1}]
       \E[u_{\ell-1}e_\ell^{\mathsf T}+e_\ell u_{\ell-1}^{\mathsf T}]\\
 &=\tfrac12M_{\ell-1}+\tfrac12e_\ell e_\ell^{\mathsf T}.
 \end{aligned}
\]
Starting from $M_1=e_1e_1^{\mathsf T}$, induction shows that $M_\ell$ is diagonal. Its trace is one because $\|u_\ell\|=1$ pointwise. These are precisely the two assertions in \eqref{eq:cov}. Here the required matrix is the uncentered second moment; its diagonal entries need not be equal.
\end{proof}

The next calculation explains exactly what the random permutation contributes. For a fixed sweep, the $n-1$ angle directions occupy only a small part of the full tangent space. Averaging over the permutation redistributes those directions among the coordinate planes so that, in mean square, every coefficient of a skew-symmetric displacement is treated equally. The conclusion is an exact isotropy identity, and it remains true even after conditioning on whether the permutation is even or odd.

\begin{lemma}[Isotropic projection, including fixed permutation parity]\label{lem:isotropy}
Let $A\in\so(n)$. Let $P$ be uniform either over all permutation matrices or over the permutation matrices with a prescribed determinant $\tau\in\{-1,+1\}$. Independently let the angles be uniform. Then
\begin{equation}\label{eq:isotropy}
 \E\sum_{\ell=1}^{n-1}\ip{PAP^{-1}}{B_\ell}^{\!2}
 =\frac2n\norms A^2.
\end{equation}
\end{lemma}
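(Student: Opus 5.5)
The plan is to compute the left side of \eqref{eq:isotropy} in two stages: first average over the angles with $P$ fixed, using Lemmas~\ref{lem:frame} and~\ref{lem:cov}, and then average over $P$. Write $C=PAP^{-1}\in\so(n)$.

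By \eqref{eq:wedge}, $B_\ell=\sum_{a\le\ell}(u_\ell)_aE_{a,\ell+1}$. Since the $E_{ab}$ are orthonormal for $\ip\cdot\cdot$,
\[
 \ip{C}{B_\ell}=\sum_{a=1}^{\ell}(u_\ell)_a\,C_{a,\ell+1}.
\]
The angles are independent of $P$, and $C$ depends only on $P$. Conditioning on $P$, squaring and applying the vanishing off-diagonal moments in \eqref{eq:cov} gives
\[
 \E\bigl[\ip{C}{B_\ell}^2\,\big|\,P\bigr]=\sum_{a=1}^{\ell}w_{a,\ell}\,C_{a,\ell+1}^2,
 \qquad w_{a,\ell}:=\E[(u_\ell)_a^2].
\]
The weights are deterministic, and by the second part of \eqref{eq:cov} they satisfy $\sum_a w_{a,\ell}=1$. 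Only the pointwise structure of $B_\ell$ and the uncentered covariance are used here.

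Next I average over $P$. Let $\sigma$ be the permutation with $Pe_i=e_{\sigma(i)}$. Then $C_{a,b}=A_{\sigma^{-1}(a),\sigma^{-1}(b)}$, and because $A$ is skew-symmetric, $C_{a,b}^2$ depends only on the unordered pair $\{\sigma^{-1}(a),\sigma^{-1}(b)\}$.

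The key claim is that, for fixed $a\ne b$, this unordered pair is uniformly distributed over all $\binom n2$ pairs. For $P$ uniform over all permutations this is immediate. For a fixed parity class, I would argue as follows.
\begin{itemize}
\item The class is a coset $\sigma_0\mathcal A_n$, or $\mathcal A_n\sigma_0$, of the alternating group.
\item The alternating group acts transitively on unordered pairs when $n\ge3$. Given $\{a,b\}$ and $\{c,d\}$, some permutation maps one to the other. If it is odd, compose it with the transposition $(a\,b)$, which is odd and fixes $\{a,b\}$ as a set.
\item A uniform element of a group acting transitively induces the uniform law on the orbit, since stabilizers all have equal size.
\item Composing with the fixed bijection $\sigma_0$ preserves uniformity on the set of pairs.
\end{itemize}
The case $n=2$ is trivial, since there is only one pair. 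Hence in every case
\[
 \E\,C_{a,b}^2=\binom n2^{-1}\sum_{i<j}A_{ij}^2=\frac{2}{n(n-1)}\norms A^2.
\]
I expect this parity-restricted uniformity to be the only point needing care.

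Combining the two stages,
\[
 \E\sum_{\ell=1}^{n-1}\ip{C}{B_\ell}^2=\sum_{\ell=1}^{n-1}\sum_{a=1}^{\ell}w_{a,\ell}\,\frac{2\norms A^2}{n(n-1)}=(n-1)\cdot\frac{2\norms A^2}{n(n-1)}=\frac2n\norms A^2,
\]
which is \eqref{eq:isotropy}. The weights $w_{a,\ell}$ need not be equal, and Lemma~\ref{lem:cov} remarks that they are not; only their unit sums enter.
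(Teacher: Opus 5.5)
Your proof is correct. Its overall structure matches the paper's: average over the angles with $P$ fixed, using \eqref{eq:wedge} and Lemma~\ref{lem:cov}, then average $C_{a,\ell+1}^2$ over $P$, then sum the unit-mass weights.

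The one divergence is the parity-restricted step. You prove that $\{\sigma^{-1}(a),\sigma^{-1}(b)\}$ is uniform on each coset of $\mathcal A_n$, using transitivity of the alternating group on unordered pairs and an orbit-counting argument. The paper deliberately avoids any transitivity statement. It uses the bijection $P\mapsto T_{ab}P$ between the two parity classes, which sends $A'_{ab}$ to $-A'_{ab}$ and so preserves $(A'_{ab})^2$. The two conditional means are therefore equal. Since they average to the unconditional mean $\norms A^2/\binom n2$, each equals it.

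The two arguments rest on the same observation. Your correction of an odd permutation, $\pi\mapsto\pi\circ(a\,b)$ with $\pi=\sigma^{-1}$, is exactly the paper's map $P\mapsto T_{ab}P$. The paper uses it directly as a measure-preserving parity swap, so it needs neither cosets nor stabilizer counting. Your version packages it as a group-action fact and yields the slightly more explicit statement that the unordered pair is uniform within each parity class.
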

\begin{proof}
First fix $P$ and put $A'=PAP^{-1}$. By \eqref{eq:wedge},
$\ip{A'}{B_\ell}=\sum_{a\leq\ell}(u_\ell)_aA'_{a,\ell+1}$.
Squaring before averaging gives
\[
 \begin{aligned}
 \E_\theta\ip{A'}{B_\ell}^{\!2}
 &=\sum_{a,b\leq\ell}
   \E[(u_\ell)_a(u_\ell)_b]A'_{a,\ell+1}A'_{b,\ell+1}\\
 &=\sum_{a\leq\ell}w_{\ell,a}(A'_{a,\ell+1})^2,
 \qquad w_{\ell,a}=\E[(u_\ell)_a^2],\quad
 \sum_a w_{\ell,a}=1.
 \end{aligned}
\]
The cross terms vanish by Lemma~\ref{lem:cov}. This is the role of angle averaging; permutation symmetry alone is not the justification for discarding those terms.

For distinct fixed $a,b$, a uniform permutation sends their preimages to a uniform ordered pair of distinct indices. Since $A_{ji}^2=A_{ij}^2$,
\begin{equation}\label{eq:permaverage}
 \E_P(A'_{ab})^2
 =\frac1{n(n-1)}\sum_{i\ne j}A_{ij}^2
 =\frac{\norms A^2}{\binom n2}.
\end{equation}
To obtain the same identity conditional on parity, let $T_{ab}$ transpose coordinates $a,b$. The map $P\mapsto T_{ab}P$ is a bijection between the two parity classes, and
\[
 \bigl((T_{ab}P)A(T_{ab}P)^{-1}\bigr)_{ab}
 =(T_{ab}A'T_{ab}^{-1})_{ab}=A'_{ba}=-A'_{ab}.
\]
It therefore preserves the square in \eqref{eq:permaverage}. The two conditional means are equal, and each equals their unconditional mean. This argument does not require an additional transitivity assertion about the alternating group.

For either allowed law of $P$, independence of the angles and $P$ now yields
\[
 \E_{P,\theta}\ip{PAP^{-1}}{B_\ell}^{\!2}
 =\sum_{a\leq\ell}w_{\ell,a}
       \frac{\norms A^2}{\binom n2}
 =\frac{\norms A^2}{\binom n2}.
\]
Summing over $\ell$ and using $(n-1)/\binom n2=2/n$ proves \eqref{eq:isotropy}. Only linearity of expectation is used in this final summation; the directions $B_\ell$ are generally dependent.
\end{proof}

By Lemma~\ref{lem:frame}, the sum on the left of \eqref{eq:isotropy} is the squared norm of the orthogonal projection of $PAP^{-1}$ onto $\operatorname{span}\{B_1,\ldots,B_{n-1}\}$. Thus its expected fraction of the total squared norm is $(n-1)/\dim\so(n)=2/n$. This averaged projection identity, rather than a uniform distribution of the random subspace, is the isotropy needed below.

\section{The coupling and the Wasserstein upper bound}\label{sec:upperproof}

The goal is to couple two copies started at nearby matrices. We use the same random permutation in both copies and make a small, carefully chosen change to the angles in the second copy. To first order, this change cancels the component of the initial displacement lying in the $n-1$ angle directions identified in the previous section. The only subtle point is probabilistic: after changing the angles, the second angle vector must still have exactly the product-uniform distribution required by the Markov kernel. The triangular construction below guarantees this exactly, not approximately.

\subsection{A measure-preserving triangular change of angles}

We adapt the angle-shift projection coupling of
\cite[Section~4.3, proof of Lemma~1]{Oliveira09} to an entire ordered sweep.
A separate shift in each angle can remove the corresponding tangent projection, but these shifts depend on the other angles. The dependence on preceding angles in \eqref{eq:Bdef} allows all shifts to be made simultaneously without changing the product-uniform law. We first record the elementary change-of-variables fact that justifies this step.

\begin{lemma}[Triangular torus shifts]\label{lem:triangular}
Suppose $a_\ell:\T^{\ell-1}\to\R$ is a smooth periodic function for $1\leq\ell\leq s$, with $a_1$ constant. For every real $h$, the map
\begin{equation}\label{eq:torusshift}
 \theta'_\ell=\theta_\ell-h a_\ell(\theta_1,\ldots,\theta_{\ell-1})
 \pmod{2\pi}
\end{equation}
is a smooth measure-preserving bijection of $\T^s$. In particular, independent uniform input angles give independent uniform output angles.
\end{lemma}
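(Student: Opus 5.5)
The plan is to treat the map as a composition of $s$ elementary shears, each of which shifts a single coordinate by an amount depending only on earlier coordinates. It is cleanest to give explicit inverses and a Fubini/Jacobian argument. Set $\Phi(\theta)=\theta'$, with $\theta'$ given by \eqref{eq:torusshift}. First I will check that $\Phi$ is well defined and smooth on $\T^s$. Each $a_\ell$ is a smooth periodic function of $(\theta_1,\ldots,\theta_{\ell-1})\in\T^{\ell-1}$, so $\theta_\ell-ha_\ell(\cdot)$ is a smooth real-valued function of the torus variables up to the $2\pi\mathbb Z$ ambiguity in $\theta_\ell$. That ambiguity disappears once we reduce modulo $2\pi$. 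Hence each coordinate of $\Phi$ is a smooth map $\T^s\to\T$.

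Next I will show bijectivity by solving the triangular system recursively. Given $\theta'$, set $\theta_1=\theta'_1+ha_1$, where $a_1$ is constant. Then, for $\ell=2,\ldots,s$, set
\[
 \theta_\ell=\theta'_\ell+h\,a_\ell(\theta_1,\ldots,\theta_{\ell-1})\pmod{2\pi},
\]
where the arguments on the right are the coordinates already reconstructed. This defines a smooth map $\Psi$. By induction on $\ell$, $\Psi\circ\Phi$ and $\Phi\circ\Psi$ are both the identity: at stage $\ell$ the earlier coordinates agree, so the shifts $h a_\ell(\cdot)$ agree and cancel. Therefore $\Phi$ is a diffeomorphism with inverse $\Psi$.

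For measure preservation, lift locally to $\R^s$. There the Jacobian matrix $\partial\theta'/\partial\theta$ is lower triangular with ones on the diagonal, since $\partial\theta'_\ell/\partial\theta_\ell=1$ and $\theta'_\ell$ does not depend on $\theta_j$ for $j>\ell$. Its determinant is therefore $1$. Because $\Phi$ is a diffeomorphism of $\T^s$ with unit Jacobian, the change-of-variables formula gives $\int f\circ\Phi\,\dd\theta=\int f\,\dd\theta$ for every integrable $f$. So Lebesgue (Haar) measure on $\T^s$ is preserved.

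As an alternative that avoids Jacobians, one could use Fubini iteratively. Condition on $\theta_1,\ldots,\theta_{\ell-1}$; the map $\theta_\ell\mapsto\theta_\ell-c$ is a rotation of $\T$, which preserves the uniform law. Integrating out the last coordinate first and inducting downward then gives the same conclusion. In either form, the statement about independent uniform angles follows, because the product-uniform law is normalized Haar measure on $\T^s$.

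The only mildly delicate step is the well-definedness and smoothness modulo $2\pi$ of both $\Phi$ and the recursive inverse, and that is settled by the periodicity of the $a_\ell$. The rest is routine.
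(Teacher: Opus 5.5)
Your proposal is correct and follows essentially the same route as the paper. Both arguments use the periodicity of the $a_\ell$ for well-definedness, the recursive triangular inverse, the unit lower-triangular Jacobian with the change-of-variables formula, and a conditional argument in which, given the preceding angles, each new coordinate is just a rotation of a uniform angle. The paper phrases that last step as $\PP\{\theta'_\ell\in C\mid\mathcal F_{\ell-1}\}=\lambda(C)$ followed by induction, while you phrase it as iterated Fubini.
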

\begin{proof}
Periodicity makes the map well defined and smooth on the torus, independently of the chosen angular representatives. Its inverse is obtained successively from
\[
 \theta_1=\theta'_1+ha_1\pmod{2\pi},\qquad
 \theta_\ell=\theta'_\ell+
 h a_\ell(\theta_1,\ldots,\theta_{\ell-1})\pmod{2\pi}.
\]
At each stage every argument on the right has already been recovered, so the inverse is unique and smooth. In local angular coordinates the Jacobian matrix is lower triangular with diagonal entries equal to one. Its determinant is one, and the change-of-variables formula proves preservation of normalized Lebesgue measure on $\T^s$.

To see explicitly why joint independence is preserved, let
$\mathcal F_{\ell-1}=\sigma(\theta_1,\ldots,\theta_{\ell-1})$.
For every Borel set $C\subset\T$,
\[
 \PP\{\theta'_\ell\in C\mid\mathcal F_{\ell-1}\}
 =\lambda(C),
\]
where $\lambda$ is uniform probability measure on $\T$. Indeed the shift is fixed conditional on $\mathcal F_{\ell-1}$, whereas $\theta_\ell$ remains uniform. The transformed preceding angles are $\mathcal F_{\ell-1}$-measurable, so conditioning again shows that $\theta'_\ell$ is uniform independently of them. Induction gives the product-uniform law. No smallness assumption on $h$ is needed for this assertion.
\end{proof}

The orthogonal group has two connected components, distinguished by the determinant: $\OO(n)_+=\SO(n)$ and $\OO(n)_-=\{U:\det U=-1\}$. A rotation sweep $Q(\theta)$ has determinant $+1$, so the only part of one sweep that can switch components is the permutation. Thus an even permutation preserves the determinant component and an odd permutation swaps the two components.

For $\sigma,\tau\in\{-1,+1\}$, let $K_\tau$ be the sweep kernel conditioned on $\det P=\tau$. It maps $\OO(n)_\sigma$ into $\OO(n)_{\tau\sigma}$. Both components carry the distance $D$.

\begin{lemma}[Infinitesimal contraction]\label{lem:local}
For $n\geq3$ and either parity $\tau$, the following estimate holds uniformly over equal-determinant $X,Y$ with $h=D(X,Y)$ sufficiently small:
\begin{equation}\label{eq:local}
 W_{2,D}(K_\tau(X,\cdot),K_\tau(Y,\cdot))^2
 \leq r_n^2h^2+C_n h^3,
\end{equation}
where $C_n<\infty$ may depend on $n$ but not on $X,Y$.
\end{lemma}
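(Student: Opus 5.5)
We construct an explicit coupling of $K_\tau(X,\cdot)$ and $K_\tau(Y,\cdot)$ and bound its second moment in $D$. Since $X,Y$ lie in the same component and $h=D(X,Y)$ is small, we can write $Y=X\exp(A)$ with $A\in\so(n)$ and $\norms A=h$, using the local identity $D(I,\exp A)=\norms A$ and bi-invariance. Both copies use the same permutation $P$ with $\det P=\tau$. Then
\[
 Q(\theta')PY=Q(\theta')PX\exp(A)=Q(\theta')\exp(A')PX,\qquad A'=PXAX^{-1}P^{-1}.
\]
Conjugation by an orthogonal matrix preserves $\norms{\cdot}$, so $\norms{A'}=h$. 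Because $PX$ is a common right factor, the distance between outputs is $D(Q(\theta),Q(\theta')\exp(A'))=D(I,Q(\theta)^{-1}Q(\theta')\exp(A'))$.

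The angles are shifted by the triangular rule $\theta'_\ell=\theta_\ell-a_\ell$, with $a_\ell=\ip{A'}{B_\ell(\theta)}$. Each coefficient $a_\ell$ depends only on $\theta_1,\ldots,\theta_{\ell-1}$ (and on the fixed $A'$, determined by $P$, $X$, $A$), and it is smooth and periodic. So Lemma~\ref{lem:triangular}, applied conditionally on $P$ with $h$ scaled out, shows that $\theta'$ is again product-uniform and independent of $P$. Hence $(Q(\theta)PX,Q(\theta')PY)$ is a genuine coupling of the two kernels.

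Next comes the Taylor expansion. By Lemma~\ref{lem:frame},
\[
 Q(\theta)^{-1}Q(\theta-a)=\exp\Bigl(-\sum_\ell a_\ell B_\ell+O(|a|^2)\Bigr),
\]
with the $O(\cdot)$ uniform in $\theta$, since $Q$ is smooth on the compact torus. Combining this with $\exp(A')$ via Baker--Campbell--Hausdorff gives
\[
 Q(\theta)^{-1}Q(\theta')\exp(A')=\exp\bigl(A'-\Pi A'+O(h^2)\bigr),
\]
where $\Pi$ is the orthogonal projection onto $\operatorname{span}\{B_\ell\}$. This uses pointwise orthonormality of the $B_\ell$ and $|a|\le h$. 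The $D$-distance, squared, is at most
\[
 \norms{A'-\Pi A'}^2+C h^3=h^2-\sum_\ell\ip{A'}{B_\ell}^2+Ch^3 .
\]
Here we use $D(I,\exp Z)\le\norms Z$ and $\norms{Z+E}^2\le\norms Z^2+2\norms Z\norms E+\norms E^2$, with $C$ depending only on $n$ through uniform bounds on derivatives of $Q$ and the exponential map.

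Finally, take expectations. Lemma~\ref{lem:isotropy}, applied to $XAX^{-1}$ with $P$ uniform among permutations of determinant $\tau$, gives
\[
 \E\sum_\ell\ip{A'}{B_\ell}^2=\frac2n h^2 .
\]
The expected squared distance is therefore at most $(1-2/n)h^2+C_nh^3=r_n^2h^2+C_nh^3$, which bounds $W_{2,D}^2$. The main obstacle is the uniformity of the error term: the second-order remainder of $\theta\mapsto Q(\theta)$, the BCH remainder, and the validity of $D(I,\exp Z)\le\norms Z$ must all be controlled independently of $X$, $Y$, $P$ and $\theta$. This follows from compactness of $\T^{n-1}$, the finiteness of the permutation set, and bi-invariance, which reduces everything to a neighbourhood of the identity.
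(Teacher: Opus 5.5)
Your proposal is correct and follows essentially the paper's argument. It uses the same permutation in both chains and shifts the angles triangularly by the frame coefficients $\ip{A'}{B_\ell}$ evaluated at the original angles. It then expands $Q(\theta)^{-1}Q(\theta')e^{A'}$ to second order near the identity, uniformly in the parameters, and finishes with the parity-conditioned isotropy identity to get $\E\sum_\ell a_\ell^2=\frac2n h^2$. The differences are cosmetic: you translate on the right ($Y=X\exp(A)$, so isotropy is applied to $XAX^{-1}$) and use BCH with the global inequality $D(I,\exp Z)\le\norms Z$, whereas the paper writes $Y=e^{hA}X$ and Taylor-expands $\log M(u)$.
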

\begin{proof}
\emph{Construction and validity of the coupling.}
The case $X=Y$ follows by synchronous coupling. Otherwise, because $X$ and $Y$ are sufficiently close and have the same determinant, there is a skew-symmetric matrix $A$ with $\norms A=1$ such that
$Y=e^{hA}X$.
Use the same permutation $P$ in both chains, chosen uniformly with parity $\tau$, and put $A'=PAP^{-1}$. Draw the first angle vector $\theta$ independently of $P$ with product-uniform law. For the second chain set
\begin{equation}\label{eq:couplingangles}
 \theta'_\ell=\theta_\ell-h\ip{A'}{B_\ell(\theta_1,\ldots,\theta_{\ell-1})}
 \pmod{2\pi}.
\end{equation}
For each fixed $P$, the shift is a smooth periodic function only of preceding original angles. Lemma~\ref{lem:triangular} therefore gives the exact product-uniform law of $\theta'$ conditional on $P$. Since this conditional law is the same for every $P$, the vector $\theta'$ is also independent of $P$. Thus
$X'=Q(\theta)PX$ and $Y'=Q(\theta')PY$ have the required transition laws, with no approximation of either marginal. Both have determinant $\tau\det X$.

\emph{First-order displacement.}
The identity $Pe^{hA}=e^{hA'}P$ gives
$Y'=Q(\theta')e^{hA'}PX$. Right multiplying both outputs by $(PX)^{-1}$ and then left multiplying by $Q(\theta)^{-1}$ yields
\begin{equation}\label{eq:relativeoutput}
 D(X',Y')=D\bigl(I,Q(\theta)^{-1}Q(\theta')e^{hA'}\bigr).
\end{equation}
Write $a_\ell=\ip{A'}{B_\ell(\theta)}$ and $a=(a_1,\ldots,a_{n-1})$. With $P,A,\theta$ fixed, introduce
\[
 M(u)=Q(\theta)^{-1}Q(\theta-ua)e^{uA'}.
\]
The coefficients $a_\ell$ are evaluated at the original $\theta$ and held fixed as $u$ varies; they are not recomputed at $\theta-ua$. Consequently Lemma~\ref{lem:frame} gives
\[
 M(0)=I,\qquad
 M'(0)=A'-\sum_{\ell=1}^{n-1}a_\ell B_\ell=:Z.
\]

\emph{Projection identity.}
For each realization, $\sum_\ell a_\ell B_\ell$ is the orthogonal projection of $A'$ onto the span of the frame. Explicitly, orthonormality and the choice $a_\ell=\ip{A'}{B_\ell}$ give
\begin{equation}\label{eq:residual}
 \begin{aligned}
 \norms Z^2
 &=\norms{A'}^2
   -2\sum_\ell a_\ell\ip{A'}{B_\ell}
   +\sum_{\ell,j}a_\ell a_j\ip{B_\ell}{B_j}\\
 &=\norms{A'}^2-\sum_{\ell=1}^{n-1}a_\ell^2.
 \end{aligned}
\end{equation}
The negative sum includes the cross term with $A'$; orthogonality eliminates only cross terms between distinct frame vectors. In particular, $\|a\|_2\leq1$ and $\norms Z\leq1$, since $\norms{A'}=1$.

\emph{Uniform remainder and averaging.}
For fixed $n$, the unit sphere of $\so(n)$ and the angle torus are compact, the permutation set is finite, and $M(u)$ is smooth in all continuous parameters. As $M(0)=I$ for every choice of those parameters, a sufficiently small interval in $u$ keeps all $M(u)$ in a common neighborhood of $I$ on which the matrix logarithm is well defined and smooth. Taylor's theorem therefore gives
\[
 \log M(h)=hZ+R_h,\qquad \norms{R_h}\leq c_n h^2,
\]
with $c_n$ and the interval independent of $P,A,\theta$, and hence of the starting pair. Periodicity of $Q$ ensures that reducing angles modulo $2\pi$ introduces no discontinuity in this argument. Using \eqref{eq:relativeoutput} and the local distance identity,
\[
 \begin{aligned}
 D(X',Y')^2
 &=\norms{hZ+R_h}^2\\
 &=h^2\norms Z^2+2h\ip Z{R_h}+\norms{R_h}^2
 =h^2\norms Z^2+O_n(h^3),
 \end{aligned}
\]
where the last remainder is uniform because $\norms Z\leq1$.
Only now take expectations. By \eqref{eq:residual} and the parity-conditioned identity of Lemma~\ref{lem:isotropy},
\[
 \E\norms Z^2=1-\E\sum_{\ell=1}^{n-1}a_\ell^2
 =1-\frac2n=r_n^2.
\]
Thus the expected squared cost of this admissible coupling is at most
$r_n^2h^2+C_nh^3$. Taking the infimum over couplings proves \eqref{eq:local}.
\end{proof}

\subsection{From nearby starting points to probability measures}

We have so far shown contraction only for two sufficiently nearby starting points. Oliveira's local-to-global theorem is the device that removes this restriction: on a length space, a uniform infinitesimal Wasserstein contraction can be integrated along a shortest path to give the same contraction factor for arbitrary starting points, and then for arbitrary probability measures. A fixed output isometry lets us apply that theorem on a single determinant component even when the permutation parity is odd. We use \cite[Theorem~3]{Oliveira09} in exactly this way.

\begin{lemma}[Componentwise contraction via Oliveira's theorem]
\label{lem:global}
Let $n\geq3$ and $\sigma,\tau\in\{-1,+1\}$. For any probability
measures $\alpha,\beta$ supported on $\OO(n)_\sigma$,
\begin{equation}\label{eq:global}
 W_{2,D}(\alpha K_\tau,\beta K_\tau)
 \leq r_n W_{2,D}(\alpha,\beta),
 \qquad r_n=\sqrt{1-\frac2n}.
\end{equation}
The distance on the left is computed within $\OO(n)_{\tau\sigma}$,
and the distance on the right within $\OO(n)_\sigma$.
\end{lemma}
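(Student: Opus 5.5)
The plan is to reduce to a Markov kernel on a single geodesic length space and then invoke \cite[Theorem~3]{Oliveira09}. That theorem says that on a length space, if a kernel $K$ satisfies $W_{2}(K(x,\cdot),K(y,\cdot))\leq (r+o(1))\,d(x,y)$ as $d(x,y)\to0$, uniformly in the base point, then $W_2(\alpha K,\beta K)\leq r\,W_2(\alpha,\beta)$ for all probability measures.

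\textbf{Step 1: move the kernel onto one component.} Each component $\OO(n)_\sigma$ with the path distance $D$ is a compact, connected, complete length space. Fix any matrix $J\in\OO(n)$ with $\det J=\tau$, for example a coordinate reflection when $\tau=-1$ and $J=I$ otherwise. Left multiplication by $J^{-1}$ is an isometry for $D$ and maps $\OO(n)_{\tau\sigma}$ onto $\OO(n)_\sigma$. Define the kernel $\widetilde K_\tau(X,\cdot)$ on $\OO(n)_\sigma$ as the pushforward of $K_\tau(X,\cdot)$ under $Z\mapsto J^{-1}Z$. Because this isometry acts on the output side, Wasserstein distances are unchanged:
\[
W_{2,D}(\alpha K_\tau,\beta K_\tau)=W_{2,D}(\alpha\widetilde K_\tau,\beta\widetilde K_\tau).
\]
We therefore only need contraction for a self-map kernel of $\OO(n)_\sigma$.

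\textbf{Step 2: verify Oliveira's hypotheses.} From Lemma~\ref{lem:local} we have, uniformly over $X,Y\in\OO(n)_\sigma$ with $h=D(X,Y)$ small,
\[
W_{2,D}(\widetilde K_\tau(X,\cdot),\widetilde K_\tau(Y,\cdot))\leq \sqrt{r_n^2h^2+C_nh^3}\leq r_n h+C_n' h^2 .
\]
This gives $\limsup_{h\to0}\sup_{D(X,Y)\le h} W/D\leq r_n$ uniformly, which is the local Lipschitz condition in Oliveira's theorem.

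Oliveira's theorem also needs a measurability or continuity condition on the kernel. I would check it as follows: $X\mapsto K_\tau(X,\cdot)$ is weakly continuous, and in fact $W_{2,D}$-Lipschitz with constant $1$ under the synchronous coupling. This holds because $Q(\theta)P$ acting on the left is an isometry, so $D(Q(\theta)PX,Q(\theta)PY)=D(X,Y)$. Together with the existence of minimizing geodesics, by compactness (Hopf--Rinow), this suffices.

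\textbf{Step 3: conclude and handle the main obstacle.} Applying Oliveira's theorem gives $W_{2,D}(\alpha\widetilde K_\tau,\beta\widetilde K_\tau)\leq r_nW_{2,D}(\alpha,\beta)$, and Step~1 then yields \eqref{eq:global}.

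The main obstacle is the gluing argument inside that theorem: for arbitrary, far-apart $X,Y$, one must choose a geodesic, subdivide it finely, and glue the near-optimal local couplings. This needs measurable selection of optimal couplings and of geodesics, so that the construction extends from point masses to general $\alpha,\beta$. Since the theorem is cited rather than reproved, I would state precisely that its hypotheses are (i) a Polish length space and (ii) the uniform local contraction, and confirm both for $(\OO(n)_\sigma,D)$. If the cited version is only for point masses, I would extend it to measures as follows. Take an optimal coupling $\pi$ of $\alpha,\beta$. By measurable selection, choose near-optimal couplings of $\widetilde K_\tau(X,\cdot)$ and $\widetilde K_\tau(Y,\cdot)$ depending measurably on $(X,Y)$. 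Integrating against $\pi$ gives a coupling of $\alpha\widetilde K_\tau$ and $\beta\widetilde K_\tau$ whose cost is at most $r_n^2\int D^2\,\dd\pi$.
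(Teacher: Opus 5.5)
Your proposal is correct and follows the paper's proof essentially step for step. Like the paper, you push the output component back onto $\OO(n)_\sigma$ by a fixed left multiplication of determinant $\tau$ (an isometry for $D$), transfer the local bound of Lemma~\ref{lem:local}, verify the length-space and continuity hypotheses, and invoke \cite[Theorem~3]{Oliveira09}. Your Feller check via synchronous coupling and left-invariance of $D$ replaces the paper's dominated-convergence argument and is equally valid; your extra measurable-selection paragraph is harmless but not needed, since the paper uses the cited theorem directly for arbitrary probability measures.
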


\begin{proof}
Fix $\sigma,\tau$ and put $M=\OO(n)_\sigma$. Define
\[
 J_+=I,\qquad
 J_-=\operatorname{diag}(-1,1,\ldots,1),\qquad
 T_\tau(U)=J_\tau U.
\]
Since $\det J_\tau=\tau$, the map $T_\tau$ sends
$\OO(n)_{\tau\sigma}$ bijectively onto $M$. Left multiplication
preserves the intrinsic distance $D$, so $T_\tau$ is an isometry.
We write $(T_\tau)_\#\eta$ for the distribution of $T_\tau(U)$ when $U$ has law $\eta$. Consequently, for any probability measures $\eta,\zeta$ supported
on $\OO(n)_{\tau\sigma}$,
\[
 W_{2,D}\bigl((T_\tau)_\#\eta,(T_\tau)_\#\zeta\bigr)
 =W_{2,D}(\eta,\zeta).
\]
Indeed, pushing both coordinates of a coupling forward by $T_\tau$
preserves its cost, and the inverse isometry gives the reverse
inequality.

Define a Markov kernel on $M$ by
\[
 \widehat K_{\sigma,\tau}(X,\cdot)
 :=(T_\tau)_\#K_\tau(X,\cdot),
 \qquad X\in M.
\]
The space $(M,D)$ is a compact metric space in which the distance between two points is realized by path length; in particular it is a Polish length space in the terminology of Oliveira's theorem, and all probability measures on it have finite second moments. The transformed kernel is also Feller, meaning that it sends continuous test functions to continuous functions: for every continuous $f$ on $M$,
\[
 \widehat K_{\sigma,\tau}f(X)
 =\mathbb E\!\left[
     f\bigl(J_\tau Q(\theta)PX\bigr)
     \,\middle|\,\det P=\tau
   \right]
\]
is continuous in $X$ by dominated convergence.

By the isometry identity and Lemma~\ref{lem:local}, whenever
$X,Y\in M$ and $h=D(X,Y)>0$ is sufficiently small,
\[
 \frac{
 W_{2,D}\bigl(
   \widehat K_{\sigma,\tau}(X,\cdot),
   \widehat K_{\sigma,\tau}(Y,\cdot)
 \bigr)}{D(X,Y)}
 \leq \sqrt{r_n^2+C_nh}.
\]
Thus, for every $X\in M$,
\[
 \limsup_{\substack{Y\to X\\Y\ne X}}
 \frac{
 W_{2,D}\bigl(
   \widehat K_{\sigma,\tau}(X,\cdot),
   \widehat K_{\sigma,\tau}(Y,\cdot)
 \bigr)}{D(X,Y)}
 \leq r_n.
\]
All hypotheses of the theorem
\cite[Theorem~3]{Oliveira09} are therefore satisfied, with $p=2$
and Lipschitz constant $r_n$. It follows that
\[
 W_{2,D}\bigl(
   \alpha\widehat K_{\sigma,\tau},
   \beta\widehat K_{\sigma,\tau}
 \bigr)
 \leq r_n W_{2,D}(\alpha,\beta).
\]
Finally,
$\alpha\widehat K_{\sigma,\tau}=(T_\tau)_\#(\alpha K_\tau)$,
and likewise for $\beta$. Undoing the output isometry gives
\begin{equation}\label{eq:conditionalcontraction}
 \begin{aligned}
 W_{2,D}(\alpha K_\tau,\beta K_\tau)
 &=
 W_{2,D}\bigl(
   \alpha\widehat K_{\sigma,\tau},
   \beta\widehat K_{\sigma,\tau}
 \bigr)\\
 &\leq r_n W_{2,D}(\alpha,\beta),
 \end{aligned}
\end{equation}
as claimed. The limit above is taken with $n$ fixed, so no bound
on $C_n$ uniform in $n$ is required.
\end{proof}

\subsection{Matching determinant components and counting every sweep}

If $\alpha,\beta$ have the same masses $w_+,w_-$ on the two components, define
\begin{equation}\label{eq:Wcomponent}
 \Wc(\alpha,\beta)^2
 =\sum_{\sigma\in\{-1,+1\}}w_\sigma
 W_{2,D}(\alpha_\sigma,\beta_\sigma)^2,
\end{equation}
where $\alpha_\sigma,\beta_\sigma$ are the conditional laws and zero-weight terms are omitted. Equivalently, this is the least expected squared intrinsic distance among couplings that match the determinants almost surely. By \eqref{eq:metriccomparison},
\begin{equation}\label{eq:Wcomparison}
 W_{2,\rho}(\alpha,\beta)\leq\sqrt{2/n}\,\Wc(\alpha,\beta).
\end{equation}

\begin{lemma}[Full-kernel component-matched contraction]\label{lem:components}
Let $K=\tfrac12(K_++K_-)$ be the unconditioned sweep kernel. For all $\alpha,\beta$ with equal component masses,
\begin{equation}\label{eq:componentcontraction}
 \Wc(\alpha K,\beta K)\leq r_n\Wc(\alpha,\beta).
\end{equation}
Left multiplication by any fixed orthogonal matrix preserves $\Wc$.
\end{lemma}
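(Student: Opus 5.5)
The plan is to decompose both input laws by determinant component and then apply Lemma~\ref{lem:global} once for each pair $(\sigma,\tau)$. Write $\alpha=\sum_\sigma w_\sigma\alpha_\sigma$ and $\beta=\sum_\sigma w_\sigma\beta_\sigma$ with common weights $w_\sigma$. Since $K=\tfrac12(K_++K_-)$, we have
\[
 \alpha K=\sum_{\sigma,\tau}\tfrac12 w_\sigma\,\alpha_\sigma K_\tau ,
\]
and the piece $\alpha_\sigma K_\tau$ lives on $\OO(n)_{\tau\sigma}$. The same decomposition holds for $\beta K$. Hence $\alpha K$ and $\beta K$ have equal component masses, namely $\tfrac12(w_++w_-)=\tfrac12$ on each component whenever both parities are used. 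This shows that the left side of \eqref{eq:componentcontraction} is well defined.

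Next I would build a determinant-matched coupling of $\alpha K$ and $\beta K$ by gluing. For each $(\sigma,\tau)$ with $w_\sigma>0$, take an optimal $W_{2,D}$ coupling $\pi_{\sigma,\tau}$ of $\alpha_\sigma K_\tau$ and $\beta_\sigma K_\tau$ on $\OO(n)_{\tau\sigma}$. Optimal couplings exist by compactness. Lemma~\ref{lem:global} gives
\[
 \int D^2\,\dd\pi_{\sigma,\tau}\le r_n^2W_{2,D}(\alpha_\sigma,\beta_\sigma)^2 .
\]
The mixture $\pi=\sum_{\sigma,\tau}\tfrac12w_\sigma\pi_{\sigma,\tau}$ has the correct marginals and pairs points with equal determinants almost surely. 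By the coupling characterization of $\Wc$ stated after \eqref{eq:Wcomponent},
\[
 \Wc(\alpha K,\beta K)^2\le\sum_{\sigma,\tau}\tfrac12w_\sigma r_n^2W_{2,D}(\alpha_\sigma,\beta_\sigma)^2=r_n^2\Wc(\alpha,\beta)^2 .
\]
The one point to verify is that characterization: any determinant-matched coupling restricts on each output component to a coupling of the conditional laws with total mass $w'_\sigma$. Conversely, any family of conditional couplings glues into a determinant-matched coupling. So the infimum over determinant-matched couplings equals the weighted sum in \eqref{eq:Wcomponent}. Note that a single output component may receive two pieces, from $(\sigma,\tau)$ and $(-\sigma,-\tau)$. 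The mixture is still a valid coupling of the conditional laws there, and it is merely suboptimal, which is harmless for an upper bound.

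For the final claim, left multiplication $U\mapsto LU$ by a fixed $L\in\OO(n)$ is an isometry for $D$. It maps each component onto the component of sign $\det L$ times the original, and it maps both conditional laws along with their common weights. Pushing forward a determinant-matched coupling therefore gives a determinant-matched coupling of the images with the same cost, and applying $L^{-1}$ gives the reverse inequality.

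I expect no real obstacle. The only care needed is bookkeeping: zero-weight components must be omitted, and the component-matched coupling characterization must be applied on the output side, where the masses are recombined from two parity sources.
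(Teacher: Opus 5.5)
Your proposal is correct and is essentially the paper's argument. You apply Lemma~\ref{lem:global} to each pair $(\sigma,\tau)$ and mix optimal couplings with the common weights $\tfrac12 w_\sigma$; the paper phrases this as joint convexity of squared Wasserstein distance on each output component $\omega$ and then averages over $\omega$, and your isometry argument for the final claim matches its reasoning.
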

\begin{proof}
Both output laws have component masses $1/2$. Conditional on output determinant $\omega$, the first law is
$\sum_\sigma w_\sigma\alpha_\sigma K_{\omega\sigma}$, and the second has the corresponding expression with $\beta$. Squared Wasserstein distance is jointly convex under mixing with common weights: mix an optimal coupling for each pair of summands. Applying \eqref{eq:conditionalcontraction}, the squared distance between these output conditional laws is at most
$\sum_\sigma w_\sigma r_n^2W_{2,D}(\alpha_\sigma,\beta_\sigma)^2$.
Averaging this bound over the two values of $\omega$ proves \eqref{eq:componentcontraction}. A fixed left multiplier is an isometry on each component and permutes the two components, proving the last assertion.
\end{proof}

\begin{proof}[Proof of Theorem~\ref{thm:upper}]
Let $\sigma=\det X_0$ and compare $\alpha_0=\delta_{X_0}$ with $\beta_0=H_{n,\sigma}$. Their determinants match and
$\Wc(\alpha_0,\beta_0)\leq\Delta_n$.
Apply to both laws the kernels and fixed multipliers in \eqref{eq:process}. Left multiplication by any matrix of determinant $\tau$ sends $H_{n,\sigma}$ to $H_{n,\tau\sigma}$. Therefore $\beta_0 K=H_n$, since permutation parity is fair, and after every later step the comparison law remains $H_n$. Lemma~\ref{lem:components} applies at every sweep, including the first, and gives
\[
 \Wc(\nu_{n,m}^{A,x},H_n)\leq\Delta_n r_n^m.
\]
Combine this with \eqref{eq:Wcomparison} to obtain the $b_n r_n^m$ bound. The inequality
$\frac12\log(1-2/n)\leq-1/n$ gives the exponential form; rescaling the metric gives \eqref{eq:unnormalized}.

For the bound by $\sqrt2$, independently couple any $V\sim\nu$ with $U\sim H_n$. Each Haar matrix entry has mean zero, by invariance under row sign changes. Hence
\begin{equation}\label{eq:independentbound}
 \E\rho(U,V)^2=2-\frac2n\E\tr(V^{\mathsf T}U)=2.
\end{equation}
This proves $W_{2,\rho}(\nu,H_n)\leq\sqrt2$ for every $\nu$.
Finally, for $n=2$ a single uniform plane angle is Haar on $\SO(2)$, and the independent uniform parity chooses the two components equally. Deterministic orthogonal factors preserve Haar measure, so the claim follows.
\end{proof}

\section{Metric entropy of the sampler's support}\label{sec:cover}

The lower bound starts from a simple observation: after the permutation choices are fixed, $m$ sweeps depend on only $m(n-1)$ continuous angles. We quantify how many normalized Frobenius balls are needed to cover all possible outputs. Later we compare this covering number with the exponentially small Haar mass of a ball of fixed radius.

Let $S_{n,m}^{A,x}$ be the set of outputs in \eqref{eq:process} as every angle and permutation varies. It is compact, since it is a finite union of continuous images of a compact torus. It supports $\nu_{n,m}^{A,x}$.
For a set $S$ and $\delta>0$, write $N_\rho(S,\delta)$ for the minimum cardinality of a $\delta$-net with centers in $S$. Its logarithm is often called the metric entropy of $S$ at scale $\delta$.

\begin{lemma}[Covering a sweep product]\label{lem:cover}
For $m\geq1$ and $\delta>0$, let
\[
 q(m,\delta)=\left\lceil\frac{\pi\sqrt2\,m}{\delta}\right\rceil.
\]
Then
\begin{equation}\label{eq:covergeneral}
 N_\rho(S_{n,m}^{A,x},\delta)
 \leq(n!)^m q(m,\delta)^{m(n-1)}.
\end{equation}
In particular,
\begin{equation}\label{eq:coverquarter}
 N_\rho(S_{n,m}^{A,x},1/4)\leq(n!)^m(40m)^{m(n-1)}.
\end{equation}
\end{lemma}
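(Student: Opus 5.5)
The approach is to fix the $m$ permutations and discretize the angles. Once all permutations $P_1,\ldots,P_m$ are fixed, the output $X_m$ is a smooth function of the $m(n-1)$ angles. So it suffices to show three things: for each of the $(n!)^m$ permutation choices, the image of the angle torus $\T^{m(n-1)}$ is covered by $q^{m(n-1)}$ balls of $\rho$-radius $\delta$ with centers in that image; the resulting nets are unioned over permutation choices; and \eqref{eq:coverquarter} is then a numerical specialization.

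\emph{Lipschitz bound in the angles.} Fix the permutations and compare two angle configurations $\theta,\theta'$ with $|\theta^{(t)}_\ell-\theta'^{(t)}_\ell|\leq\kappa$ (torus distance) for every $t,\ell$. Telescope over sweeps. The process is a product $A_mQ(\theta^{(m)})P_m\cdots A_1Q(\theta^{(1)})P_1A_0x$ of orthogonal factors, so replacing one sweep at a time changes $\dF$ by at most $\sum_t\normF{Q(\theta^{(t)})-Q(\theta'^{(t)})}$; this uses invariance of $\normF{\cdot}$ under left and right multiplication by orthogonal matrices. Within a sweep, integrate along the straight segment in the torus coordinates and use \eqref{eq:flatdifferential} of Lemma~\ref{lem:frame}, which gives $\normF{\dd Q_\theta(v)}=\sqrt2\|v\|_2$. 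Hence $\normF{Q(\theta)-Q(\theta')}\leq\sqrt2\,\|\theta-\theta'\|_2\leq\sqrt2\,\kappa\sqrt{n-1}$. Dividing by $\sqrt n$ gives $\rho\leq\sqrt2\,m\kappa$.

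\emph{Grid.} Partition $\T$ into $q$ arcs of length $2\pi/q$ and use the arc midpoints as grid values, so every angle is within $\kappa=\pi/q$ of a grid value. The grid points then give at most $q^{m(n-1)}$ centers per permutation choice. These centers lie in $S^{A,x}_{n,m}$ because they are genuine outputs. Every output is within $\rho$-distance $\sqrt2\,m\pi/q\leq\delta$ of some center, by the choice $q=\lceil\pi\sqrt2\,m/\delta\rceil$. This proves \eqref{eq:covergeneral}. For $\delta=1/4$, $q=\lceil4\sqrt2\,\pi m\rceil\leq\lceil17.78\,m\rceil\leq 18m+1\leq40m$ since $m\geq1$, which gives \eqref{eq:coverquarter}.

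\emph{Main obstacle.} The only delicate step is making sure the Lipschitz constant does not pick up a dimension factor. The Euclidean norm of the angle difference carries $\sqrt{n-1}$, which cancels against the $1/\sqrt n$ in $\rho$; this cancellation depends on the exact flatness identity \eqref{eq:flatdifferential}, not just a crude per-rotation bound. I would check that step carefully, and also check that working with torus representatives is harmless, which follows from periodicity of $Q$.
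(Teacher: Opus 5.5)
Your proposal is correct and follows essentially the same route as the paper. You fix the permutation branch, use the flatness identity \eqref{eq:flatdifferential} along straight segments between shortest lifts to get a per-sweep normalized bound of $\sqrt{2(n-1)/n}\,\kappa\leq\sqrt2\,\kappa$, and telescope over the $m$ sweeps using orthogonal invariance. You then grid each angle with covering radius $\pi/q$ and take the union over the $(n!)^m$ branches, which is exactly the paper's argument, and your numerical check for $\delta=1/4$ is also sound.
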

\begin{proof}
Fix the permutation tuple. If each coordinate of two angle vectors is at circular distance at most $a$, choose coordinatewise shortest lifts and interpolate along the straight segment. Lemma~\ref{lem:frame} bounds the normalized length of its image by
\[
 \sqrt{\frac2n}\,\|\theta-\phi\|_2
 \leq\sqrt{\frac{2(n-1)}n}\,a\leq\sqrt2\,a.
\]
Thus this is an upper bound for the $\rho$ distance between the two sweeps. If two full angle arrays differ by at most $a$ in every coordinate, telescope the product of the $m$ sweeps. Every left and right multiplier in each telescoping term is orthogonal. The triangle inequality therefore bounds the output distance by $m\sqrt2\,a$, regardless of the fixed factors $A_t$.

A grid of $q$ equally spaced angles on $\T$ has covering radius $\pi/q$. Take $q=q(m,\delta)$ and grid each of the $m(n-1)$ angle coordinates. The resulting output grid is a $\delta$-net for this permutation branch. There are $(n!)^m$ branches, proving \eqref{eq:covergeneral}. For $\delta=1/4$,
$q\leq4\pi\sqrt2\,m+1<40m$, which proves \eqref{eq:coverquarter}.
\end{proof}

For fixed $\delta>0$ and $m=o(n/\log n)$, this yields the uniform estimate
\begin{equation}\label{eq:entropyasymptotic}
 \log N_\rho(S_{n,m}^{A,x},\delta)
 \leq m\log(n!)+m(n-1)\log q(m,\delta)=o(n^2).
\end{equation}
Indeed, eventually $m\leq n$, so both logarithmic factors are $O_\delta(\log n)$ after extracting their respective factors of $n$.

\section{Elementary small-ball bounds for Haar measure}\label{sec:smallball}

The covering estimate above is useful only if a typical Haar matrix is unlikely to lie near any one net point. The next lemmas provide that complementary estimate. The proof uses only the familiar sequential construction of a Haar orthogonal matrix: conditional on the first few columns, the next column is uniform on the unit sphere in the remaining orthogonal complement.

The Haar small-ball estimate below is an explicit variant of the
argument in \cite[Section~6.1, Claim~1 and its proof]{Oliveira09}.
We retain a free diagonal threshold to cover every fixed normalized
radius below $\sqrt2$, track the finite-dimensional constants, and
include both determinant components. We first record the elementary
spherical-cap estimate used in this adaptation.

\begin{lemma}[Spherical cap]\label{lem:cap}
Let $Z$ be uniform on the unit sphere of $\R^d$, with $d\geq2$, and let $v\in\R^d$ satisfy $\|v\|\leq1$. For $0<a<1$,
\begin{equation}\label{eq:cap}
 \PP\{\langle Z,v\rangle\geq a\}
 \leq\frac12(1-a^2)^{(d-1)/2}.
\end{equation}
\end{lemma}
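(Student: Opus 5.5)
Reduce to the case $v=e_1$ with $\|v\|=1$, then compute the cap probability from the explicit density of one coordinate of a uniform point on the sphere and bound it by an elementary integral.

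\emph{Reduction.} If $v=0$ the probability is zero, so assume $v\ne0$. Since $a>0$,
\[
 \{\langle Z,v\rangle\geq a\}\subset\{\langle Z,v/\|v\|\rangle\geq a\}
 \quad\text{when }\|v\|\leq1 .
\]
Hence it suffices to treat $\|v\|=1$. By rotation invariance of the uniform law on the sphere we may further take $v=e_1$, so the task is to bound $\PP\{Z_1\geq a\}$.

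\emph{Density and the case $d\geq3$.} For $d\geq2$, the coordinate $Z_1$ has density
\[
 f_d(t)=c_d(1-t^2)^{(d-3)/2}\quad\text{on }(-1,1),
 \qquad c_d=\frac{\Gamma(d/2)}{\sqrt\pi\,\Gamma((d-1)/2)} .
\]
Write $g(a)=\PP\{Z_1\geq a\}$ and $h(a)=\tfrac12(1-a^2)^{(d-1)/2}$. Then $g(0)=h(0)=\tfrac12$ and $g(1)=h(1)=0$. Consider the ratio
\[
 \frac{g'(a)}{h'(a)}=\frac{c_d}{(d-1)a}.
\]
For $d\geq3$ this is decreasing on $(0,1)$, so a monotone-ratio (l'H\^opital-type) argument started from the common endpoint values should give $g\leq h$. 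This endpoint comparison is the step I expect to need the most care.

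A cleaner alternative, which I would actually adopt, is the substitution trick. For $d\geq3$ and $t\geq a>0$ we have $1\leq t/a$, so
\[
 g(a)\leq\frac{c_d}{a}\int_a^1 t(1-t^2)^{(d-3)/2}\,dt=\frac{c_d}{a(d-1)}(1-a^2)^{(d-1)/2}.
\]
This gives the bound whenever $a\geq 2c_d/(d-1)$, which is roughly $\sqrt{2/(\pi d)}$. For smaller $a$ I would instead use log-concavity: the function $\log g$ is concave because $f_d$ is log-concave for $d\geq3$, while $\log h=\log\tfrac12+\frac{d-1}2\log(1-a^2)$. Comparing derivatives at $0$ and using the monotone ratio above settles the small-$a$ range. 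Checking that the two regimes cover $(0,1)$ without a gap is the main technical point.

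\emph{The case $d=2$.} Here $Z_1=\cos\phi$ with $\phi$ uniform, so $g(a)=\arccos(a)/\pi$. We need
\[
 \frac{\arccos a}{\pi}\leq\frac12\sqrt{1-a^2}.
\]
Setting $a=\cos\psi$ with $\psi\in(0,\pi/2)$, this reads $\psi/\pi\leq\tfrac12\sin\psi$, that is, $\sin\psi\geq 2\psi/\pi$. That is Jordan's inequality, which holds by concavity of $\sin$ on $[0,\pi/2]$.

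Since $g$ and $h$ agree at both endpoints, I expect the whole argument to reduce to this kind of concavity comparison of $g$ and $h$ in all dimensions.
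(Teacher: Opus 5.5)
Your route differs from the paper's. The paper never uses the density of $Z_1$. It writes $Z=G/\|G\|$ with $G$ standard Gaussian and rewrites the event as $G_1\geq \frac{a}{\sqrt{1-a^2}}\sqrt S$ with $S=\sum_{j\geq2}G_j^2$. It then applies $\overline\Phi(t)\leq\frac12e^{-t^2/2}$ conditionally on $S$ and evaluates a $\chi^2_{d-1}$ moment generating function, which returns exactly $(1-a^2)^{(d-1)/2}$. The only calculus there is a single sign-change argument for the Gaussian tail, and it needs neither $c_d$ nor any case split in $d$.

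You work directly with $f_d$, and that works. The monotone-ratio step you flagged as delicate is already a complete proof for every $d\geq2$, so you can drop both the log-concavity patch and the separate Jordan-inequality case. Two small corrections. First, $h'(a)=-\tfrac{(d-1)a}{2}(1-a^2)^{(d-3)/2}$, so $g'/h'=2c_d/((d-1)a)$, not $c_d/((d-1)a)$; this matches your later threshold $a\geq 2c_d/(d-1)$. Second, this ratio is decreasing for all $d\geq2$, not only $d\geq3$.

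Then set $\phi=h-g$:
\[
 \phi'(a)=(1-a^2)^{(d-3)/2}\Bigl(c_d-\tfrac{d-1}{2}\,a\Bigr).
\]
This changes sign exactly once on $(0,1)$, from positive to negative. It cannot stay positive, since $\phi(0)=\phi(1)=0$. So $\phi$ rises from $0$ and falls back to $0$, hence $\phi\geq0$. This is the paper's sign-change argument for $\overline\Phi$, applied directly on the sphere. Equivalently, by the monotone l'H\^opital rule with the common zero at $a=1$, $g/h$ decreases on $(0,1)$ and tends to $1$ as $a\to0^+$. Your route gives a self-contained one-variable proof at the cost of the explicit marginal density and its constant; the paper's Gaussian representation avoids both.

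The route you said you would actually adopt is not finished as written. Knowing that $\log g$ and $\log h$ are both concave, with $(\log g)'(0)=-2c_d<0=(\log h)'(0)$, does not stop them from crossing. So ``comparing derivatives at $0$'' does not settle the range $a<2c_d/(d-1)$.

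The gap can be closed:
\begin{itemize}
\item Concavity of $\log g$ gives the tangent-line bound $g(a)\leq\frac12e^{-2c_da}$.
\item The function $\psi(a)=2c_da+\frac{d-1}2\log(1-a^2)$ is concave with $\psi(0)=0$, so it suffices to check $\psi(a^*)\geq0$ at $a^*=2c_d/(d-1)$.
\item That condition reads $\log(1-x)+2x\geq0$ with $x=(a^*)^2\leq 1/4$ for $d\geq3$, which holds.
\end{itemize}
This is extra work that the single sign-change argument above makes unnecessary.
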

\begin{proof}
By rotation invariance and monotonicity in $\|v\|$, it is enough to take $v=e_1$. Write $Z=G/\|G\|$, where the coordinates of $G$ are independent standard Gaussians, and let $S=\sum_{j=2}^dG_j^2$. The event is
\[
 G_1\geq\frac{a}{\sqrt{1-a^2}}\sqrt S.
\]
For a standard Gaussian, $\overline\Phi(t)\leq\frac12e^{-t^2/2}$ for $t\geq0$. For completeness, the difference between the right side and the left side vanishes at zero and at infinity; its derivative is
$e^{-t^2/2}(1/\sqrt{2\pi}-t/2)$, which changes sign exactly once from positive to negative. The difference is therefore nonnegative.
Conditioning on $S$ and integrating gives
\[
 \PP\{Z_1\geq a\}
 \leq\frac12\E\exp\left\{-\frac{a^2S}{2(1-a^2)}\right\}
 =\frac12(1-a^2)^{(d-1)/2},
\]
where the last identity is the product of $d-1$ elementary Gaussian integrals.
\end{proof}

We use the following familiar description of Haar columns, including a justification. For Haar measure on $\OO(n)$, conditional on the first $j-1$ columns, column $j$ is uniform on the unit sphere in their orthogonal complement. It follows either from invariance under orthogonal transformations fixing those columns, or by Gram--Schmidt orthogonalization of independent standard Gaussian columns. In the latter construction the output law is left orthogonally invariant and hence Haar. The same description holds on either prescribed determinant component for $j\leq n-1$: conditioning the orientation only chooses the sign of the final column and leaves the first $n-1$ columns unchanged. This last fact can also be seen by right multiplication by a diagonal matrix that reverses only the last column.

\begin{lemma}[Haar small-ball bound; variant of Oliveira's Claim~1]\label{lem:haarball}
Let $0<R<\sqrt2$, put $t=1-R^2/2$, and choose $0<a<t$. Define
\begin{equation}\label{eq:haarparameters}
 s=\frac{t-a}{1-a},\qquad k=\lfloor sn\rfloor,
 \qquad
 B_n(R,a)=\min\left\{1,\,
 2^{n-k}(1-a^2)^{k(2n-k-1)/4}\right\}.
\end{equation}
For any $V\in\OO(n)$,
\begin{equation}\label{eq:haarball}
 H_n\{U:\rho(U,V)\leq R\}\leq B_n(R,a).
\end{equation}
The same bound holds for either conditional Haar law $H_{n,\sigma}$ and any center $V\in\OO(n)$. For fixed $R,a$,
\begin{equation}\label{eq:ballasymptotic}
 \log B_n(R,a)\leq-c(R,a)n^2+O_{R,a}(n),\qquad
 c(R,a)=-\frac{s(2-s)}4\log(1-a^2)>0,
\end{equation}
for all sufficiently large $n$.
\end{lemma}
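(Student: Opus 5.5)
By left invariance of $H_n$ (and of each $H_{n,\sigma}$ under left multiplication by elements of $\SO(n)$, or by swapping components), and since $\rho$ is invariant under left multiplication by $V^{-1}$, I will reduce to $V=I$. For the conditional laws I will not use invariance. Instead I will work directly with the column description, which is valid on either component for columns $j\leq n-1$. The event $\rho(U,I)\leq R$ is $\normF{U-I}^2\leq nR^2$. Equivalently,
\[
\sum_{j=1}^n U_{jj}\geq n(1-R^2/2)=nt .
\]
Since every $U_{jj}\leq1$, if at most $k$ diagonal entries are $\geq a$, then the sum is at most $k+(n-k)a$. Here $k=\lfloor sn\rfloor$ and $s$ solves $s+(1-s)a=t$, which is exactly the definition of $s$. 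If at most $k-1$ entries are $\geq a$, the trace is at most $(k-1)+(n-k+1)a<nt$, because $k-1<sn$ and $a<1$. So on the ball event at least $k$ diagonal entries satisfy $U_{jj}\geq a$. (If $k$ entries suffice with equality, fine; the bound only needs ``at least $k$''.)

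Next comes a union bound over the index set $J$ of $k$ such columns. There are $\binom nk\leq 2^n$ choices. I expect the factor $2^{n-k}$ to come from a sharper count or an absorbed constant, e.g.\ $\binom nk\,2^{-k}\leq 2^{n-k}$ paired with the factor $\tfrac12$ in each cap bound. That matches, since each cap estimate carries $\tfrac12$.

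For a fixed $J=\{j_1<\cdots<j_k\}$, I will reveal the columns in order. I will reorder the columns so that those in $J$ come first, using right multiplication by a permutation, which preserves Haar; under conditional laws the sign issue affects only the last column, and I need $k\leq n-1$, true since $s<1$. Conditional on the previous $i-1$ selected columns, column $j_i$ is uniform on the unit sphere of a subspace of dimension $d_i=n-i+1$. The event $\langle Ue_{j_i},e_{j_i}\rangle\geq a$ then equals $\langle Z,v\rangle\geq a$, where $v$ is the projection of $e_{j_i}$ onto that complement, with $\|v\|\leq1$. Lemma~\ref{lem:cap} gives the conditional probability bound $\tfrac12(1-a^2)^{(n-i)/2}$. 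Multiplying over $i=1,\dots,k$, the exponent becomes
\[
\sum_{i=1}^k (n-i)/2 = k(2n-k-1)/4 .
\]
Together with the $2^{-k}$ and the binomial count, this yields $B_n(R,a)$. The $\min\{1,\cdot\}$ is trivial.

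For the asymptotic \eqref{eq:ballasymptotic}, I will substitute $k=sn+O(1)$. This gives
\[
k(2n-k-1)/4 = s(2-s)n^2/4+O(n),
\]
while $(n-k)\log2=O(n)$, and $\log(1-a^2)<0$ gives $c>0$. The main obstacle is the careful conditioning step: ensuring the cap lemma applies with $d_i\geq2$ and that the conditional-component version remains valid. Both follow from $k\leq n-1$ and the column description recorded just before the lemma.
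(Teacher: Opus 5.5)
Your proposal is correct and follows essentially the same route as the paper: reduce to center $I$, deduce $\tr U\geq tn$ and hence at least $k$ diagonal entries $\geq a$, bound the joint event for a fixed index set by iterated column conditioning with Lemma~\ref{lem:cap}, and union-bound over $\binom nk\leq 2^n$ index sets. The only difference is cosmetic. The paper moves the chosen indices to the first $k$ diagonal positions by conjugating with a permutation, which keeps each determinant component fixed. Your column reordering by right multiplication may instead swap components under $H_{n,\sigma}$. This is harmless because, as you note, the first $k\leq n-1$ columns have the same sequential description on either component, and the events only involve inner products with the fixed vectors $e_{j_i}$.
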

\begin{proof}
Left multiplication by $V^{\mathsf T}$ reduces the center to $I$ and sends any prescribed determinant component to a prescribed component. Since
$\rho(U,I)^2=2-2\tr(U)/n$, membership in the ball implies $\tr U\geq tn$.
Let $J=\{j:U_{jj}\geq a\}$. All diagonal entries are at most one, and those outside $J$ are at most $a$, so
\[
 tn\leq\tr U\leq |J|+(n-|J|)a.
\]
Consequently $|J|\geq sn$, and in particular $J$ contains a set of $k$ indices. If $k=0$, the stated bound is the trivial bound one, so assume $k\geq1$. We have $k\leq n-1$ because $s<1$.

For any fixed set of $k$ indices, conjugation by a permutation carries its diagonal entries to the first $k$ diagonal entries and preserves Haar measure on either component. It suffices to bound the probability that $U_{jj}\geq a$ for $1\leq j\leq k$.
Conditionally on the first $j-1$ columns, column $j$ is uniform on a sphere in dimension $d_j=n-j+1\geq2$. Its inner product with $e_j$ is its inner product with the orthogonal projection of $e_j$ into that subspace, a vector of norm at most one. Lemma~\ref{lem:cap} gives the conditional bound
\[
 \PP\{U_{jj}\geq a\mid U_{\cdot1},\ldots,U_{\cdot,j-1}\}
 \leq\frac12(1-a^2)^{(n-j)/2}.
\]
Iterated conditioning yields
\[
 \PP\{U_{jj}\geq a\text{ for all }j\leq k\}
 \leq2^{-k}(1-a^2)^{\frac12\sum_{j=1}^k(n-j)}
 =2^{-k}(1-a^2)^{k(2n-k-1)/4}.
\]
A union bound over at most $\binom nk\leq2^n$ sets proves \eqref{eq:haarball}. Finally, $k=sn+O(1)$, so the logarithm of the untruncated expression in \eqref{eq:haarparameters} equals
$\frac{s(2-s)}4\log(1-a^2)n^2+O_{R,a}(n)$. It is negative for all sufficiently large $n$, proving \eqref{eq:ballasymptotic}.
\end{proof}

At $R=1/2$ and $a=3/4$, we have $t=7/8$, $s=1/2$, and $1-a^2=7/16$. Thus Lemma~\ref{lem:haarball} gives exactly the value $B_n$ in \eqref{eq:Bn}. Its quadratic exponent is
\begin{equation}\label{eq:kappa}
 \kappa=\frac3{16}\log\frac{16}7>0.
\end{equation}

\section{Proofs of the Wasserstein lower bounds}\label{sec:lowerproof}

The argument now combines the two preceding ingredients. A small neighborhood of the sampler's support is covered by relatively few balls, while each such ball has very small Haar probability. Therefore most Haar matrices remain a definite distance from every possible output of the sampler, which forces a Wasserstein lower bound.

\begin{lemma}[Support-neighborhood separation]\label{lem:separation}
Let $(M,d)$ be a compact metric space, $H$ a probability measure, and $\nu$ supported on a compact set $S$. Suppose that $S$ has a $\delta$-net of cardinality at most $L$, and that every ball of radius $R>\delta$ centered at a net point has $H$ mass at most $B$. Then for $p\geq1$,
\begin{equation}\label{eq:separation}
 W_{p,d}(\nu,H)
 \geq(R-\delta)\bigl(1-\min\{1,LB\}\bigr)^{1/p}.
\end{equation}
\end{lemma}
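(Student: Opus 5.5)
The proof is a direct transport argument. Let $x_1,\ldots,x_L'$ with $L'\leq L$ be the given $\delta$-net of $S$. Define the neighborhood
\[
 G=\bigcup_{i}\{y:d(y,x_i)\leq R\},
\]
which is a finite union of closed balls and hence Borel. By the hypothesis and a union bound, $H(G)\leq L'B\leq LB$. Since also $H(G)\leq 1$, we get $H(M\setminus G)\geq 1-\min\{1,LB\}$.

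Next we record the pointwise separation. Take any $s\in S$ and any $y\notin G$. Some net point $x_i$ satisfies $d(s,x_i)\leq\delta$, while $d(y,x_i)>R$. The triangle inequality then gives
\[
 d(s,y)\geq d(y,x_i)-d(s,x_i)>R-\delta>0 .
\]

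Now fix an arbitrary coupling $\pi\in\Pi(\nu,H)$ and let $(V,U)$ have law $\pi$. Since $\nu$ is supported on $S$, we have $V\in S$ almost surely. On the event $\{U\notin G\}$ the separation bound gives $d(V,U)\geq R-\delta$. Hence
\[
 \int d^p\,\dd\pi\geq(R-\delta)^p\,\PP\{U\notin G\}\geq(R-\delta)^p\bigl(1-\min\{1,LB\}\bigr).
\]
Taking the infimum over couplings and then $p$-th roots yields \eqref{eq:separation}.

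There is no real obstacle here. The only points needing care are measurability of $G$, which holds because it is a finite union of closed balls, and the fact that $V\in S$ almost surely, which holds because $S$ is compact and carries $\nu$. The $\min$ in \eqref{eq:separation} simply covers the trivial case $LB\geq1$, where the bound is zero.
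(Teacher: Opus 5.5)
Your proof is correct and is essentially the paper's argument. The paper works with the event $\{\dist(U,S)>R-\delta\}$ and uses $d(X,U)\geq\dist(U,S)$, whereas you work directly with the complement of the union $G$ of the closed $R$-balls; that complement lies inside the paper's event by the same triangle-inequality step, and both routes then bound $\E d(V,U)^p\geq(R-\delta)^p(1-\min\{1,LB\})$ and take the infimum over couplings.
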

\begin{proof}
Every point whose distance from $S$ is at most $R-\delta$ lies within distance $R$ of a net point. The $H$ mass of this neighborhood is at most $\min\{1,LB\}$. In any coupling $X\sim\nu$, $U\sim H$, the inequality $d(X,U)\geq\dist(U,S)$ holds almost surely. Therefore
\[
 \E d(X,U)^p\geq(R-\delta)^p
 H\{U:\dist(U,S)>R-\delta\}
 \geq(R-\delta)^p(1-\min\{1,LB\}).
\]
Taking the infimum over couplings proves the result. Equivalently for $p=1$, the $1$-Lipschitz function $U\mapsto\dist(U,S)$ is a witness to the discrepancy.
\end{proof}

\begin{proof}[Proof of Theorem~\ref{thm:lower}]
Apply Lemma~\ref{lem:cover} with $\delta=1/4$, and Lemma~\ref{lem:haarball} with $R=1/2$, $a=3/4$. Lemma~\ref{lem:separation} then gives \eqref{eq:lower} with \eqref{eq:Bn}--\eqref{eq:eta}.
\end{proof}

\begin{proof}[Proof of Theorem~\ref{thm:stronglower}]
Fix any $0<r<\sqrt2$, and choose $\delta>0$ such that $R=r+\delta<\sqrt2$. Choose any $0<a<1-R^2/2$. By \eqref{eq:entropyasymptotic}, the support has a $\delta$-net with logarithmic cardinality $o(n^2)$. Lemma~\ref{lem:haarball} bounds every $R$-ball by $\exp\{-c(R,a)n^2+O_{R,a}(n)\}$. Their product tends to zero, uniformly over every deterministic choice. Lemma~\ref{lem:separation} therefore implies
\[
 \liminf_{n\to\infty}W_{1,\rho}(\nu_{n,m_n}^{A,x},H_n)\geq r.
\]
Since $r<\sqrt2$ is arbitrary, this liminf is at least $\sqrt2$. On the other hand, $W_{1,\rho}\leq W_{2,\rho}\leq\sqrt2$, the last inequality being \eqref{eq:independentbound}. Both distances must converge to $\sqrt2$.
\end{proof}

The convergence statement by itself is formulated for $m=o(n/\log n)$. To obtain the quantitative $\Omega(n/\log n)$ mixing-time bound, we retain constants in the entropy comparison rather than merely negating that little-oh condition.

\begin{proof}[Proof of Corollary~\ref{cor:mixing}]
The upper bound is Theorem~\ref{thm:upper}. For the lower bound fix $\eps<\sqrt2$, choose $r$ with $\eps<r<\sqrt2$, and then choose $\delta>0$ with $R=r+\delta<\sqrt2$. Fix $a\in(0,1-R^2/2)$ and write $c_0=c(R,a)>0$.
For $1\leq m\leq n$, Lemma~\ref{lem:cover} gives
\begin{align*}
 \log N_\rho(S_{n,m}^{A,x},\delta)
 &\leq mn\log n+mn\log\bigl(1+\pi\sqrt2\,n/\delta\bigr)\\
 &\leq mn\bigl(2\log n+C_\delta\bigr)
\end{align*}
for a finite constant $C_\delta$. Thus if
$m\leq(c_0/4)n/\log n$, the logarithm of the product of this covering cardinality and the Haar ball bound is at most
\[
 -c_0n^2+O(n)+\frac{c_0n^2}{4\log n}
 (2\log n+C_\delta)
 =-\frac{c_0}{2}n^2+o(n^2).
\]
Uniformly throughout this range, Lemma~\ref{lem:separation} gives
$W_{1,\rho}(\nu_{n,m}^{A,x},H_n)\geq r(1-o(1))>\eps$ for all sufficiently large $n$. Since $W_2\geq W_1$, such $m$ cannot suffice. At $m=0$, $W_{2,\rho}(\delta_{A_0x},H_n)=\sqrt2$ by direct integration. Reducing the constant slightly absorbs integer endpoints and proves \eqref{eq:mixinggap} with, for example, any fixed $c_\eps<c_0/4$ once $n_\eps$ is large enough. The estimates did not depend on the insertion matrices or their locations, so optimization over those choices cannot evade the lower bound.
\end{proof}

\subsection{Fixed absolute accuracy}\label{subsec:absoluteproof}

The absolute-accuracy lower bound uses the same support-versus-Haar comparison as before, but now the relevant balls shrink like $n^{-1/2}$ in the normalized metric. At that scale, the logarithm of the support covering number is of order $mn\log n$, whereas a fixed-radius ball in the unnormalized Frobenius metric has Haar mass of order $\exp\{-c n^2\log n\}$. Balancing these exponents rules out mixing for $m$ of order $n$ with a sufficiently small constant. The proof below keeps explicit constants only to obtain the convenient value $1/100$; no optimization is intended.

\begin{proof}[Proof of Corollary~\ref{cor:absolutemixing}]
Fix $\eps>0$ and take $n$ sufficiently large, in particular $n>18\eps^2$. The upper bound follows from \eqref{eq:unnormalized}. We prove the lower bound uniformly over all deterministic factors, initial states, and integers $1\leq m\leq n/100$.

First cover $S=S_{n,m}^{A,x}$ by balls of radius $\eps$ in $\dF$, equivalently radius $\delta_n=\eps/\sqrt n$ in $\rho$. Lemma~\ref{lem:cover} gives a net with cardinality at most
\begin{equation}\label{eq:absoluteentropy}
 \mathcal L_{n,m,\eps}
 =(n!)^m\left\lceil\frac{\pi\sqrt2\,m\sqrt n}{\eps}\right\rceil^{m(n-1)}.
\end{equation}
For $1\leq m\leq n$, use $n!\leq n^n$ and
\[
 \left\lceil\frac{\pi\sqrt2\,m\sqrt n}{\eps}\right\rceil
 \leq1+\frac{\pi\sqrt2}{\eps}n^{3/2}
 \leq\left(1+\frac{\pi\sqrt2}{\eps}\right)n^{3/2}.
\]
With $C_\eps=\log(1+\pi\sqrt2/\eps)$ this yields the uniform bound
\begin{equation}\label{eq:absolutelogentropy}
 \log\mathcal L_{n,m,\eps}
 \leq\frac52 mn\log n+C_\eps mn.
\end{equation}

Next bound every Haar ball of radius $3\eps$ in $\dF$. In the nonasymptotic formula \eqref{eq:haarparameters}, choose
\[
 R_n=\frac{3\eps}{\sqrt n},\qquad
 a_n=1-\frac{9\eps^2}{n},\qquad
 t_n=1-\frac{9\eps^2}{2n}.
\]
The condition $n>18\eps^2$ ensures $0<R_n<\sqrt2$ and $0<a_n<t_n<1$. Furthermore
\[
 \frac{t_n-a_n}{1-a_n}=\frac12,
 \qquad k=\lfloor n/2\rfloor,
 \qquad 1-a_n^2\leq\frac{18\eps^2}{n}<1.
\]
Consequently Lemma~\ref{lem:haarball} gives, uniformly in $V\in\OO(n)$,
\begin{equation}\label{eq:absoluteball}
 H_n\{U:\dF(U,V)\leq3\eps\}
 \leq\mathcal B_{n,\eps}
 :=\min\left\{1,\,2^{n-k}
       \left(\frac{18\eps^2}{n}\right)^{k(2n-k-1)/4}\right\}.
\end{equation}
We are using the lemma's finite-$n$ formula, not \eqref{eq:ballasymptotic}, which was stated for fixed $R,a$. To check the leading term explicitly, write $k=n/2-\eta_n$ with $0\leq\eta_n\leq1/2$. Then
\[
 \frac{k(2n-k-1)}4
 =\frac{3n^2}{16}-\frac n8
  -\frac{n\eta_n}{4}+\frac{\eta_n-\eta_n^2}{4}
 =\frac{3n^2}{16}+O(n).
\]
Taking logarithms of the untruncated expression in \eqref{eq:absoluteball} therefore gives
\begin{equation}\label{eq:absoluteballrate}
 \log\mathcal B_{n,\eps}
 \leq-\frac3{16}n^2\log n+O_\eps(n^2).
\end{equation}
Indeed the other terms are $(n-k)\log2$, an $O_\eps(n^2)$ term from $\log(18\eps^2)$, and an $O(n\log n)$ term from the floor error, which is also $O(n^2)$. The leading expression is negative for all sufficiently large $n$.

For $m\leq n/100$, \eqref{eq:absolutelogentropy} and \eqref{eq:absoluteballrate} imply
\begin{equation}\label{eq:absoluteseparationrate}
 \log(\mathcal L_{n,m,\eps}\mathcal B_{n,\eps})
 \leq-\left(\frac3{16}-\frac1{40}\right)n^2\log n
       +O_\eps(n^2)
 =-\frac{13}{80}n^2\log n+O_\eps(n^2).
\end{equation}
The constants are independent of $m,A,x$ in the stated range, so the product tends to zero uniformly. An $\eps$-net for $S$ and balls of radius $3\eps$ cover its closed $2\eps$-neighborhood. Apply Lemma~\ref{lem:separation} in the metric $\dF$, with $\delta=\eps$, $R=3\eps$, and $p=1$, to obtain
\[
 W_{1,\dF}(\nu_{n,m}^{A,x},H_n)
 \geq2\eps\bigl(1-\min\{1,\mathcal L_{n,m,\eps}\mathcal B_{n,\eps}\}\bigr)
 =2\eps(1-o_\eps(1))>\eps
\]
for all sufficiently large $n$. Since $W_2\geq W_1$, every integer $1\leq m\leq n/100$ is excluded. At $m=0$, \eqref{eq:independentbound} gives $W_{2,\dF}(\delta_{A_0x},H_n)=\sqrt{2n}>\eps$. Hence $t_{n,\mathrm F}^{A}(\eps)>\lfloor n/100\rfloor$, which implies \eqref{eq:absolutemixinggap} with $c_*=1/100$. Choosing $n_\eps$ large enough to satisfy all preceding conditions proves the assertion, uniformly over the deterministic choices.
\end{proof}

\section{The dimension obstruction in total variation}\label{sec:singular}

Total variation behaves differently from Wasserstein distance. For a fixed sequence of permutations, the output after $m$ sweeps is a smooth function of only $m(n-1)$ real angle parameters. As long as this number is smaller than the dimension $n(n-1)/2$ of one determinant component of $\OO(n)$, the image has Haar measure zero. This gives an elementary singularity obstruction before any quantitative metric-entropy estimate is needed.

\begin{lemma}[Smooth images of lower-dimensional parameter spaces]\label{lem:nullimage}
Let $M$ be a compact smooth $N$-dimensional Riemannian manifold and let $f:\T^d\to M$ be smooth, with $d<N$. Then $f(\T^d)$ has Riemannian volume zero.
\end{lemma}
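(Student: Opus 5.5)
The plan is to prove Lemma~\ref{lem:nullimage} with the standard Lipschitz covering argument, a quantitative form of the easy case of Sard's theorem. Throughout, write $\mathrm{vol}$ for Riemannian volume on $M$ and $d_M$ for its Riemannian distance.

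\emph{Step 1: Lipschitz bound.} Since $\T^d$ is compact and $f$ is smooth, the operator norm of $\dd f$ is bounded. I would put on $\T^d$ the flat metric inherited from $\R^d$. Integrating along shortest segments then gives a constant $L<\infty$ with $d_M(f(x),f(y))\leq L\,|x-y|$. This is the same device used in Lemma~\ref{lem:cover}, where Lemma~\ref{lem:frame} supplied an explicit constant.

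\emph{Step 2: small balls have small volume.} I need a constant $C$ with $\mathrm{vol}(B_M(p,r))\leq C r^N$ for all $p\in M$ and all $0<r\leq r_0$. To get it, cover $M$ by finitely many coordinate charts and choose $r_0$ below a Lebesgue number of the cover, so that each small ball lies inside a single chart. On compact subsets of each chart, the metric tensor is comparable to the Euclidean one with uniform constants. Hence a Riemannian ball of radius $r$ sits inside a Euclidean ball of radius $c r$, and the volume density is bounded. Both constants are uniform by compactness.

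\emph{Step 3: covering.} Divide $\T^d$ into $q^d$ cubes of side $2\pi/q$. Each cube has diameter $2\pi\sqrt d/q$, so its image lies in a ball of radius $r=2\pi L\sqrt d/q$ about the image of any point of the cube. For $q$ large we have $r\leq r_0$, and therefore
\[
 \mathrm{vol}\bigl(f(\T^d)\bigr)\leq q^d\,C\left(\frac{2\pi L\sqrt d}{q}\right)^{N}=C'q^{d-N}.
\]
Because $d<N$, the right side tends to $0$ as $q\to\infty$.

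The only genuinely technical point is the uniformity of Step 2. Once that is secured by compactness, Steps 1 and 3 are routine. Measurability is not an issue, since $f(\T^d)$ is compact. For the later application, note that Haar measure on each component of $\OO(n)$ is a constant multiple of the Riemannian volume for $D$. So Lemma~\ref{lem:nullimage} will give Haar-nullity of each permutation branch, with $N=n(n-1)/2$ and $d=m(n-1)$; the condition $d<N$ is exactly $m<n/2$.
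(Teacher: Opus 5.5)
Your proposal is correct and is essentially the paper's own proof. Both combine a Lipschitz bound for $f$, a grid covering of $\T^d$ by $O(q^d)$ cells whose images lie in balls of radius $O(1/q)$, and a chart-based uniform bound $\mathrm{vol}(B)\leq Cr^N$ on small balls, giving total volume $O(q^{d-N})\to0$; the only cosmetic point is the degenerate case $d=0$, where your radius formula gives $r=0$ and one should simply use any positive radius, as the paper does by treating it as a finite set of points.
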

\begin{proof}
Smoothness and compactness bound the derivative of $f$, so $f$ is Lipschitz for the intrinsic torus metric and the Riemannian distance. A grid covers $\T^d$ by $O(\eta^{-d})$ balls of radius $\eta$. Their images lie in $O(\eta^{-d})$ balls of radius $C\eta$ in $M$. A finite collection of smooth coordinate charts with bounded metric densities implies that all sufficiently small balls of radius $u$ in $M$ have volume at most $C_Mu^N$. The volume of the image is therefore at most $C'\eta^{N-d}$, which tends to zero. The case $d=0$ is the same assertion for finitely many points.
\end{proof}

\begin{proof}[Proof of Theorem~\ref{thm:TV}]
Fix the $m$ permutations. The output is a smooth function of $d=m(n-1)$ angles and has a fixed determinant on that branch. Each determinant component has dimension
$N=\dim\so(n)=n(n-1)/2$. Its normalized Riemannian volume is conditional Haar measure, because the metric is invariant under translations in $\SO(n)$; translating by one reflection identifies the other component. When $m<n/2$, we have $d<N$, so Lemma~\ref{lem:nullimage} makes every branch Haar-null. There are only $(n!)^m$ branches, and their finite union remains Haar-null and compact. The output assigns that union probability one, whereas Haar measure assigns it zero. The total-variation distance is consequently one.
\end{proof}

\begin{remark}
The converse is not asserted: $m\geq n/2$ does not, by dimension counting alone, imply absolute continuity, a density lower bound, or total-variation mixing. Nor does singularity alone imply a Wasserstein lower bound. The separate neighborhood argument in Sections~\ref{sec:smallball}--\ref{sec:lowerproof} is needed because a low-dimensional set can in principle be close to a large fraction of a higher-dimensional space.
\end{remark}

\section{Two extensions}\label{sec:extensions}

\subsection{A determinant-corrected sampler on $\SO(n)$}

Define
\begin{equation}\label{eq:correction}
 C(U)=J_{\det U}U,\qquad
 J_+=I,\quad J_-=\diag(-1,1,\ldots,1).
\end{equation}
Then $C(U)\in\SO(n)$, and $C_\#H_n=H_{n,+}$.

\begin{proposition}\label{prop:SO}
The upper bound of Theorem~\ref{thm:upper}, the explicit lower bound of Theorem~\ref{thm:lower}, the limits of Theorem~\ref{thm:stronglower}, and the singularity assertion of Theorem~\ref{thm:TV} remain valid for $C(X_m)$ with target $H_{n,+}$.
\end{proposition}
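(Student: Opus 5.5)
The plan is to transfer each statement through the map $C$. Two properties of $C$ do all the work. First, $C$ is $1$-Lipschitz for $\rho$ in a suitable coupled sense: it acts as an isometry on each determinant component, being left multiplication by $J_{\det U}$. Second, $C_\#H_n=H_{n,+}$. Cross-component pairs cause the only difficulty, and the proof handles that case separately.

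For the upper bound I would use the component-matched coupling from the proof of Theorem~\ref{thm:upper}. Lemma~\ref{lem:components} gives $\Wc(\nu_{n,m}^{A,x},H_n)\leq\Delta_n r_n^m$, and the infimum there is over couplings that match determinants almost surely. Applying $C$ to both coordinates of such a coupling keeps determinants matched. On pairs with equal determinant, $C$ multiplies both matrices on the left by the same $J$, so $D$ and $\rho$ are preserved pointwise. The pushed-forward coupling therefore couples $C_\#\nu$ with $H_{n,+}$, and by \eqref{eq:Wcomparison} it gives $W_{2,\rho}(C_\#\nu,H_{n,+})\leq b_nr_n^m$. For the bound $\sqrt2$, I would again couple independently. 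Haar entries on $\SO(n)$ have mean zero for $n\geq2$: use invariance under the diagonal sign change $\diag(-1,-1,1,\ldots,1)$, which lies in $\SO(n)$, and similar sign changes for the other rows. The computation \eqref{eq:independentbound} then goes through. For $n=2$, exactness follows from $C_\#H_2=H_{2,+}$.

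For the lower bounds the cleanest route is to observe that $C(X_m)$ is itself supported on $C(S_{n,m}^{A,x})$. On each permutation branch the determinant is fixed, so $C$ acts there as a fixed left multiplication. Hence $C(S)$ is a union of at most $(n!)^m$ branches, each a left translate of a branch of $S$, and a $\rho$-isometric image of it. The net construction of Lemma~\ref{lem:cover} therefore applies verbatim and gives the same covering bounds \eqref{eq:covergeneral}--\eqref{eq:coverquarter}; the telescoping argument is unaffected by one extra fixed left factor. Lemma~\ref{lem:haarball} already holds for $H_{n,+}$ with an arbitrary center. Lemma~\ref{lem:separation} on the compact space $\OO(n)$, or on $\SO(n)$, then reproduces Theorem~\ref{thm:lower} and Theorem~\ref{thm:stronglower} with identical constants. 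For the upper limit $\sqrt2$ in the latter, I would reuse the $\SO(n)$ version of \eqref{eq:independentbound}.

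For singularity, each branch of $C(S)$ is a smooth image of $\T^{m(n-1)}$ in $\SO(n)$, which has dimension $n(n-1)/2$. Lemma~\ref{lem:nullimage} then gives $H_{n,+}$-nullity when $m<n/2$. The main point to get right is the upper bound: $C$ is not globally $\rho$-Lipschitz across components, so the argument must use the determinant-matched $\Wc$ coupling rather than an arbitrary optimal $\rho$-coupling. Checking that this coupling survives the pushforward is where care is needed.
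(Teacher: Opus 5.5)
Your proposal is correct and follows essentially the same route as the paper. You push forward the determinant-matched couplings, on which $C$ applies the same left multiplier to both matrices, to get the $b_nr_n^m$ bound, and you use sign-flipping diagonal matrices in $\SO(n)$ for the $\sqrt2$ bound. Because $C$ is a fixed left translation on each permutation branch, the covering, conditional small-ball, and dimension-counting arguments then carry over unchanged. Your caveat that $C$ is not globally nonexpanding across determinant components, so only matched couplings may be pushed forward, is exactly the point the paper itself records.
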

\begin{proof}
On a fixed determinant component, $C$ is left multiplication by a fixed orthogonal matrix. In every determinant-matched coupling used for the upper bound, the two matrices receive the same multiplier, so their $\rho$ distance is unchanged. Pushing those couplings forward gives the same $b_nr_n^m$ upper bound on $\SO(n)$.

For the lower bounds, each fixed permutation branch has a constant determinant. Therefore $C$ is an isometry on that branch and its angle grid, and the support covering numbers in Lemma~\ref{lem:cover} are unchanged as upper bounds. Lemma~\ref{lem:haarball} explicitly applies to conditional Haar measure, so the neighborhood argument and its asymptotic version apply without alteration. The upper bound by $\sqrt2$ also holds on $\SO(n)$ for $n\geq2$: every matrix entry has conditional Haar mean zero, since left multiplication by a diagonal matrix reversing its row and one other row preserves determinant and changes its sign. Independence then gives \eqref{eq:independentbound} with $H_{n,+}$.

Finally, $C$ is smooth on each branch, so the image-dimension proof gives the same singularity threshold. This proof uses componentwise isometry; it does not assume that $C$ is globally nonexpanding between opposite determinant components.
\end{proof}

\subsection{Ordered rooted-tree sweeps}

Let $p(j)\in\{1,\ldots,j-1\}$ for $2\leq j\leq n$. These parent choices specify a rooted tree whose vertices are ordered so that a parent precedes each child. Replace \eqref{eq:sweep} by
\begin{equation}\label{eq:tree}
 Q_{\mathcal T}(\theta)
 =R_{p(n),n}(\theta_{n-1})\cdots R_{p(2),2}(\theta_1).
\end{equation}
Each rotation introduces one previously untouched coordinate. The path sweep is the choice $p(j)=j-1$; the star sweep is $p(j)=1$.

\begin{proposition}\label{prop:tree}
All four main theorems remain valid for \eqref{eq:tree} followed on the right by an independent uniform permutation. The tree may be chosen deterministically and may differ from one sweep to the next. The determinant-corrected statements also remain valid.
\end{proposition}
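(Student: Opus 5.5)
The plan is to check, one by one, which properties of the path sweep the four theorems actually use, and to verify each property for $Q_{\mathcal T}$. The lower bounds and the singularity statement only use two facts: the output is a smooth function of $m(n-1)$ angles on each of $(n!)^m$ permutation branches, and $\theta\mapsto Q(\theta)$ satisfies the isometry bound $\normF{\dd Q_\theta(v)}^2=2\|v\|_2^2$ used in Lemma~\ref{lem:cover}. The upper bound additionally uses three things: the orthonormal frame of Lemma~\ref{lem:frame}, the triangular dependence of the directions $B_\ell$ on preceding angles, and the covariance identity of Lemma~\ref{lem:cov}, which feeds the isotropy computation of Lemma~\ref{lem:isotropy}. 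Once these hold for every ordered rooted tree, the proofs of Lemmas~\ref{lem:local}--\ref{lem:components}, Theorem~\ref{thm:upper}, Sections~\ref{sec:cover}--\ref{sec:singular}, and Proposition~\ref{prop:SO} go through verbatim.

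\textbf{Frame for a tree sweep.} Define the prefixes $S_\ell=R_{p(\ell+1),\ell+1}(\theta_\ell)\cdots R_{p(2),2}(\theta_1)$, with $E_\ell=E_{p(\ell+1),\ell+1}$ and $B_\ell=S_{\ell-1}^{-1}E_\ell S_{\ell-1}$. The differentiation step of Lemma~\ref{lem:frame} is unchanged, so $Q^{-1}\partial_{\theta_\ell}Q=B_\ell$, and $B_\ell$ depends only on $\theta_1,\ldots,\theta_{\ell-1}$. The key observation is that $S_{\ell-1}$ acts only on the coordinates $1,\ldots,\ell$ and fixes $e_{\ell+1}$. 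This is exactly the statement that each rotation introduces one previously untouched coordinate. Hence
\[
B_\ell=u_\ell e_{\ell+1}^{\mathsf T}-e_{\ell+1}u_\ell^{\mathsf T},\qquad u_\ell=S_{\ell-1}^{-1}e_{p(\ell+1)}\in\operatorname{span}(e_1,\ldots,e_\ell).
\]
Orthonormality then follows from the same disjoint-larger-index argument as in the path case.

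\textbf{Covariance.} I will show that $M_\ell=\E[u_\ell u_\ell^{\mathsf T}]$ is diagonal. The plan is to prove the stronger statement that $\E[(S_j^{-1}e_c)(S_j^{-1}e_{c'})^{\mathsf T}]$ is diagonal for $c=c'$ and zero for $c\neq c'$, where $c,c'\le j+1$, by induction on $j$. Write $S_j=R_{p(j+1),j+1}(\theta_j)S_{j-1}$. The inverse rotation maps $e_c$ to itself unless $c\in\{p(j+1),j+1\}$. If $c=p(j+1)$, it maps $e_c$ to $\cos\theta_j\,e_c+\sin\theta_j\,e_{j+1}$; if $c=j+1$, it maps $e_c$ to $-\sin\theta_j\,e_{p(j+1)}+\cos\theta_j\,e_{j+1}$. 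Then $S_{j-1}^{-1}$ is applied, and it fixes $e_{j+1}$. Averaging over the independent angle $\theta_j$ kills every cross term with odd trigonometric moments. What remains are quantities $\E[(S_{j-1}^{-1}e_c)(S_{j-1}^{-1}e_{c'})^{\mathsf T}]$ with indices in $\{1,\ldots,j\}$, together with $e_{j+1}e_{j+1}^{\mathsf T}$ and cross terms $S_{j-1}^{-1}e_c\,e_{j+1}^{\mathsf T}$. These cross terms carry a factor such as $\E[\sin\theta_j\cos\theta_j]$ or an off-diagonal $\E[\cos^2\theta_j]$-type product; I will list the cases and check that each vanishes or reduces to the inductive hypothesis.

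This bookkeeping, specifically the zero-mean property of distinct columns, is the step I expect to be the main obstacle. If it resists, the fallback is the alternative observation that the sign flip $\theta_j\mapsto\theta_j+\pi$ negates the appropriate vectors and preserves the law. Once diagonality is established, the trace is $\|u_\ell\|^2=1$, and Lemma~\ref{lem:isotropy} holds verbatim with the pair $(a,\ell+1)$. The permutation average there uses only the identity $(n-1)/\binom n2=2/n$.

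\textbf{Varying trees.} Allowing a different deterministic tree in each sweep only changes the kernel from sweep to sweep. Lemma~\ref{lem:components} applies to each kernel with the same factor $r_n$, so the contraction iterates as before. The covering bound, the dimension count, and the determinant correction are all insensitive to the tree, which gives Proposition~\ref{prop:tree}.
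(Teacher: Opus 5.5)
The frame, triangular dependence, isotropy step, varying trees, covering and dimension counts, and determinant correction all match the paper's proof. The gap is in the covariance step. The strengthened induction hypothesis you propose is false, so the induction cannot start.

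Take $j=1$, where necessarily $S_1=R_{1,2}(\theta_1)$. Then $S_1^{-1}e_1=\cos\theta_1\,e_1+\sin\theta_1\,e_2$ and $S_1^{-1}e_2=-\sin\theta_1\,e_1+\cos\theta_1\,e_2$, so
\[
 \E\bigl[(S_1^{-1}e_1)(S_1^{-1}e_2)^{\mathsf T}\bigr]
 =\tfrac12\bigl(e_1e_2^{\mathsf T}-e_2e_1^{\mathsf T}\bigr)=\tfrac12E_{12}\neq0 .
\]
This is exactly your ``off-diagonal $\E[\cos^2\theta_j]$-type'' term. For the pair $\{p(j+1),j+1\}$ it averages to $\tfrac12\bigl(\E[S_{j-1}^{-1}e_{p(j+1)}]\,e_{j+1}^{\mathsf T}-e_{j+1}\,\E[S_{j-1}^{-1}e_{p(j+1)}]^{\mathsf T}\bigr)$, which is nonzero when $S_{j-1}=I$. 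The sign-flip fallback cannot help, because the claim itself is false. Concretely, $\theta_1\mapsto\theta_1+\pi$ negates both vectors at once and leaves their outer product unchanged.

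The fix is to drop the off-diagonal claim, which you never need. For $c=c'$, the vector $S_j^{-1}e_c$ is one of three things:
\begin{itemize}
\item $S_{j-1}^{-1}e_c$;
\item $\cos\theta_j\,S_{j-1}^{-1}e_{p(j+1)}+\sin\theta_j\,e_{j+1}$;
\item $-\sin\theta_j\,S_{j-1}^{-1}e_{p(j+1)}+\cos\theta_j\,e_{j+1}$.
\end{itemize}
Its outer square therefore involves only one earlier vector, and the mixed terms carry the factor $\E[\sin\theta_j\cos\theta_j]=0$. The average is either the previous diagonal second moment or
\[
\tfrac12\E\bigl[(S_{j-1}^{-1}e_{p(j+1)})(S_{j-1}^{-1}e_{p(j+1)})^{\mathsf T}\bigr]+\tfrac12e_{j+1}e_{j+1}^{\mathsf T},
\]
so the diagonal-only induction closes on itself.

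This is the vector form of the paper's argument. The paper notes that $\E[R(\theta)^{-1}D_0R(\theta)]$ is diagonal for every diagonal $D_0$ and every uniform-angle plane rotation $R$. It then peels the outermost factor off $S_\ell$ and inducts over all diagonal $D_0$, finally specializing to $D_0=e_{p(j)}e_{p(j)}^{\mathsf T}$.
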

\begin{proof}
Let $S_{j-2}$ be the product of rotations preceding the rotation that introduces vertex $j$. It acts only on coordinates $1,\ldots,j-1$, and the corresponding angle derivative is
\[
 B_{j-1}=u_j e_j^{\mathsf T}-e_j u_j^{\mathsf T},\qquad
 u_j=S_{j-2}^{-1}e_{p(j)}.
\]
The upper-triangular support of this derivative lies in column $j$, and $\|u_j\|=1$. Thus the derivatives are orthonormal and the differential identity \eqref{eq:flatdifferential} still holds. They depend only on preceding angles, as needed for the triangular coupling.

It remains to check diagonal covariance. For any diagonal matrix $D_0$ and any coordinate-plane rotation $R(\theta)$ with uniform angle, the matrix
$\E[R(\theta)^{-1}D_0R(\theta)]$ is diagonal: its two affected diagonal entries are replaced by their average and all off-diagonal means vanish. Induct on the length of a product of independent such rotations. If $S_\ell=R_\ell S_{\ell-1}$, first average
\[
 S_\ell^{-1}D_0S_\ell
 =S_{\ell-1}^{-1}R_\ell^{-1}D_0R_\ell S_{\ell-1}
\]
over the last angle, producing a deterministic diagonal matrix inside the two prefix factors. The inductive hypothesis then makes the full expectation diagonal. Apply this to $D_0=e_{p(j)}e_{p(j)}^{\mathsf T}$ and $S_{j-2}$ to conclude that $\E[u_ju_j^{\mathsf T}]$ is diagonal, with trace one.

The permutation averaging argument in Lemma~\ref{lem:isotropy} is now unchanged. Consequently the same exact triangular coupling, infinitesimal contraction, and componentwise globalization prove the upper bound. There are still $n-1$ angle parameters per sweep, the same differential norm, and the same number of permutation branches. Hence the covering and dimension arguments give all lower bounds. Allowing a different deterministic tree at each step changes none of these estimates. Determinant correction is covered by the same componentwise argument as in Proposition~\ref{prop:SO}.
\end{proof}

Both mixing-time corollaries also hold for these extensions. The determinant correction is an isometry on each permutation branch, the ordered-tree sweeps have the same parameter count and differential bound, and the small-ball bound holds for conditional Haar measure. Thus the covering and ball estimates used in the proof of Corollary~\ref{cor:absolutemixing} remain valid. The upper bounds follow from Propositions~\ref{prop:SO}--\ref{prop:tree}, and the deterministic-start absolute $W_2$ distance on $\SO(n)$ is again $\sqrt{2n}$ for $n\geq2$.

\section{Interpretation and remaining quantitative questions}
\label{sec:discussion}

\subsection{Full Haar approximation versus statistical diagnostics}

The sampler was previously investigated using eigenvalue-based statistics and
a statistic for the full group \cite{Sepehri19}. In the reported
$51$-dimensional experiment, the full-group statistic detected departure from
Haar measure after one and two sweeps, but not at the larger sweep counts
considered \cite[Section~4 and Table~7]{Sepehri19}. The corresponding test is
consistent against every fixed alternative as the sample size tends to
infinity \cite[Remark~4.7]{Sepehri19}. Such consistency, however, does not
provide a dimension-uniform quantitative bound on the distance between a
nonrejected distribution and Haar measure.

The present results illustrate this distinction. For $n=51$,
Theorem~\ref{thm:TV} shows that the output law is singular with respect to
Haar measure through $25$ sweeps. More strongly,
Theorem~\ref{thm:stronglower} shows that if $m=o(n/\log n)$, then
\[
 W_{1,\rho}(\nu_{n,m}^{A,x},H_n),
 \quad
 W_{2,\rho}(\nu_{n,m}^{A,x},H_n)
 \longrightarrow \sqrt2 .
\]
The value $\sqrt2$ is extremal in this normalization: every probability
measure $\mu$ on $\OO(n)$ satisfies
$W_{2,\rho}(\mu,H_n)\leq\sqrt2$, while
$W_{2,\rho}(\delta_V,H_n)=\sqrt2$ for every deterministic
$V\in\OO(n)$. Thus, on this time scale, the full matrix law is
asymptotically no closer to Haar in normalized Wasserstein distance than a
deterministic orthogonal matrix. This does not contradict the finite-sample
diagnostic results, since particular observables may approach their Haar
distributions substantially earlier than the complete matrix law.

The bounds are also uniform over predetermined orthogonal insertions:
such factors are isometries for the upper-bound coupling and do not alter
the parameter count or permutation branching used in the lower bounds.
Thus the Fourier factor does not affect the quantitative estimates proved
here, although it may still influence particular observables or
application-specific performance.

\subsection{The logarithmic gap}

The principal quantitative question left open by the present analysis is the
remaining logarithmic gap between the lower and upper bounds. At fixed
normalized Frobenius accuracy,
\[
 \Omega\left(\frac{n}{\log n}\right)
 \leq
 t_{\mathrm{mix}}
 \leq
 O(n),
\]
whereas at fixed absolute Frobenius accuracy,
\[
 \Omega(n)
 \leq
 t_{\mathrm{mix}}
 \leq
 O(n\log n).
\]

The two sides arise from rather different mechanisms. The upper bound is a
local contraction estimate: one sweep removes, on average, a fraction
$2/n$ of an infinitesimal squared displacement, leading to exponential
contraction on the scale of $n$ sweeps. The lower bound is geometric. It
compares the metric entropy of the sampler's support with the Haar measure
of small metric balls. At fixed normalized radius, the support entropy
becomes negligible relative to the Haar small-ball exponent only when
$m=o(n/\log n)$, yielding the lower scale $n/\log n$.

Closing the gap therefore appears to require information not captured by
either argument alone. A sharper lower bound would need to exploit more than
the dimension and covering entropy of the set of possible outputs, for
example the geometry or concentration of the induced measure on that set.
Conversely, an improved upper bound would require substantially more than
the first-order contraction estimate proved here, since that estimate gives
the natural time scale $n$.

It remains open whether the correct fixed-normalized-accuracy mixing time is
of order $n/\log n$, of order $n$, or lies strictly between these scales.
Equivalently, at fixed absolute Frobenius accuracy it is open whether the
correct order is $n$, $n\log n$, or intermediate. Determining the sharp
order, and whether either formulation exhibits cutoff, are the main
quantitative problems left by this work.

\begin{acks}[Acknowledgment and disclosure note]
The author is grateful to Persi Diaconis for feedback on an early version of the manuscript that helped improve the writing. AI-assisted tools were used in the preparation of this manuscript for literature search, computational checking, exploring proof ideas, and editorial assistance, as well as auto-formalization using Lean. The author takes full responsibility for the mathematical arguments, accuracy, and final text.
\end{acks}

\bibliographystyle{imsart-nameyear}
\bibliography{jor_mixing_references}
\end{document}